\newtheorem{defi}{\bf Definition}
\newtheorem{rmk}{\bf Remark}
\newtheorem{lma}{\bf Lemma}
\newtheorem{thm}{\bf Theorem}
\begin{document}

\title{Method of Reduction of Variables for Bilinear Matrix Inequality Problems in System and Control Designs}

\author{Wei-Yu~Chiu,~\IEEEmembership{Member,~IEEE}
\thanks{This work was supported by the Ministry of Science and Technology of Taiwan
under Grant 102-2218-E-155-004-MY3.}%
\thanks{W.-Y. Chiu is with the Multiobjective Control Laboratory, Department of Electrical Engineering, Yuan Ze University, Taoyuan 32003, Taiwan
        (email: chiuweiyu@gmail.com).}%
\thanks{
\copyright 2016 IEEE. Personal use of this material is permitted. Permission from IEEE must be
obtained for all other uses, in any current or future media, including
reprinting/republishing this material for advertising or promotional purposes, creating new
collective works, for resale or redistribution to servers or lists, or reuse of any copyrighted
component of this work in other works.
}
\thanks{Digital Object Identifier 10.1109/TSMC.2016.2571323}               
 }
\maketitle

\begin{abstract}
Bilinear matrix inequality (BMI) problems in system and control designs are investigated
in this paper. A solution method of reduction of variables (MRV) is proposed.
This method consists of a principle of variable classification,
a procedure for problem transformation, and a hybrid algorithm that combines deterministic and stochastic search engines.
The classification principle is used to classify the decision variables of a BMI problem into two categories:  external and internal variables.
Theoretical analysis is performed to show that when the classification principle is applicable,
a BMI problem can be transformed into an unconstrained optimization problem that has fewer decision variables.
Stochastic search and deterministic search are then applied to determine the decision variables of the unconstrained problem externally and explore the internal problem structure, respectively.
The proposed method can address feasibility, single-objective, and multiobjective problems constrained by BMIs
in a unified manner.  A  number of numerical examples in system and control designs
 are provided to validate the proposed methodology.
Simulations show that the MRV can outperform existing BMI solution methods in most benchmark problems and achieve similar levels of performance in the remaining problems.
\end{abstract}

\begin{keywords}
Bilinear matrix inequality (BMI), BMI solution methods, method of reduction of variables (MRV), multiobjective BMI problems, spectral abscissa optimization, static output feedback.
\end{keywords}

\section{Introduction}\label{sec_intro}

Bilinear matrix inequality (BMI) problems  frequently arise
in system and controller designs~\cite{vanantwerp2000tutorial}, e.g.,  low-authority controller (LAC) designs~\cite{SSS,LAC1},
static output feedback designs for spectral abscissa optimization/$H_2$ optimization/$H_\infty$ optimization~\cite{dinh2012combining,dinh2012inner,COMPleib},
affine fuzzy system designs~\cite{SAFS1,SAFS2}, and observer-based robust controller designs~\cite{OCS}.
The advantages of using BMI formulations can be observed in various scenarios.
For instance, BMI formulations can
avoid a nonsmooth objective function that is hard to handle  when spectral abscissa optimization is considered~\cite{burke2002two}; they may outperform linear matrix inequality (LMI) approaches that can fail to predict the stability of Takagi--Sugeno fuzzy systems~\cite{AM}; and they can
yield less conservative designs than using LMI formulations~\cite{hu2006stability}.

While BMI problems are NP-hard~\cite{toker1995np,NP1}, BMI solution methods are continuously investigated
in the literature because of the advantages derived from using BMI formulations.
In~\cite{SSS,ostertag2008improved,path2}, path-following methods were proposed in which
controller gains were  iteratively perturbed to achieve desired performance specifications.
The methods were based on the assumption that closed- and open-loop systems were  slightly different, i.e.,
LAC designs were considered.
In~\cite{dinh2012combining}, convex--concave decomposition and linearization methods (CCDM) were combined to address static output feedback problems.
After decomposition and linearization, BMI constraints were addressed by solving a sequence of convex semi-definite programming problems.
In~\cite{dinh2012inner}, an inner convex approximation method (ICAM) was proposed as a generalized version of the CCDM.
Nonlinear semi-definite programming was considered and a regularization technique was employed to ensure a strict descent search direction.
In~\cite{orsi2006newton}, a Newton-like search method closely related to alternative projection methods
was proposed to improve convergence properties.

Alternating minimization (AM) is another popular solution method and has been widely used because of their simplicity and effectiveness~\cite{AM,goh1994biaffine,el1994synthesis}.
For the AM methods, decision variables are divided into two groups. By fixing one group of variables, the other group of variables forms an LMI problem (LMIP), which is convex and can be solved efficiently. Decision variables in separate groups are then determined alternately during the solving process of LMIPs.
Variant versions include iterative LMI (ILMI) methods~\cite{SAFS1,cao1998static,zheng2002heuristic} and the two-step procedure~\cite{OCS}.
A few Matlab toolboxes for BMI problems are also available online. For example,
 LMIRank can be used to solve rank constrained LMI problems~\cite{LMIRank}.
HIFOO employs quasi-Newton updating and gradient sampling to search for solutions. It mainly focuses on fixed-order stabilization and performance optimization problems~\cite{HIFOO_soft,HIFOO,HIFOO2}.
PENBMI, commercial software, aims at solving BMI constrained optimization problems or optimization problems that have quadratic cost functions~\cite{henrion2005solving,PENBMI_soft}.

The aforementioned methods and software packages serve as local optimization approaches to BMI problems.
Because a BMI problem is nonconvex, local optima exist and, hence,
 local optimization approaches may not be able to achieve global optimality.
To avoid attaining local optimality, we consider global optimization approaches that employ heuristic algorithms.
In~\cite{goh1994global,LPVS,apkarian2000robust,tuan1999new,safonov1994control,fukuda2001branch},
 branch-and-bound (BB) type methods were proposed.
The BB type methods replace bilinear terms with bounded new variables so that a BMI problem can be relaxed into an LMIP.
Although being possible to achieve the global optimum,  BB type methods can bear a computational burden
 because the size of the LMIs that must be solved for the lower bound can increase exponentially
 upon increasing the number of decision variables~\cite{zheng2002heuristic}.
In~\cite{beran1997global}, another global optimization approach using generalized Benders decompositions was proposed for BMI problems,
but its performance was not evaluated through a number of test problems.

In general, existing BMI solution methods can suffer from at least one of the following five
drawbacks or limitations.
First, decision variables are expressed solely in a vector form, e.g., some BB type methods. By contrast, a matrix form is more convenient in control problems~\cite{Boyd_LMI}.
Second, solution methods are originally designed to fit particular problem structures.
In some situations, applying developed methods to other problem structures, if not impossible, requires extra efforts to reformulate the problem, e.g., some AMs and ILMI methods.
In other situations, solution methods cannot be applied to problems that do not have the intended structures, e.g., path-following methods.
Third, prior derivations such as approximations or decompositions must be performed before algorithms are applied, e.g., the CCDM and ICAM, and these derivations can be cumbersome and sometimes heuristic. Fourth, only local optimization is performed while BMI problems inherently have multiple local optima.
Finally, to the best of our knowledge, existing BMI solution methods cannot address multiobjective optimization problems (MOPs)
in which a set of Pareto optimal solutions is of interest rather than the global optimal solution.\footnote{The ability to solve MOPs constrained by BMIs is worth further investigation because
MOPs naturally and frequently arise in engineering problems~\cite{Category_1,7287774,6918520,6626653}.
Solving an MOP, yielding an approximate Pareto front (APF)  and Pareto optimal set, can provide a system designer with a broad perspective on optimality. The resulting APF
can clearly illustrate how one objective affects the others, and the obtained Pareto set
allows the designer to make a posterior decision, i.e., selecting  design parameters after a set of promising candidates is available~\cite{Boyd_linear,TEVC_17}. In general, a posterior decision is preferred to a prior decision because more information has been used before the decision making~\cite{B_MOEA1}.}

To avoid the aforementioned five drawbacks or limitations,
we propose a method of reduction of variables~(MRV). The method consists of a principle of variable classification,
a transformation of the BMI problem, and a hybrid multiobjective immune algorithm (HMOIA) that solves the problem derived from the transformation.
Internal and external  variables are coined and used to denote all the decision variables involved.
The internal variable can represent a set of  matrix variables,  which is convenient in controller designs.
To develop a general-purpose solution method, we consider possible multiple objectives in BMI problems  and assume no particular problem structures.
This yields a framework that addresses feasibility problems, single-objective optimization problems (SOPs), and MOPs constrained by BMIs in a unified manner.
The developed HMOIA is a hybrid because it employs  stochastic and deterministic mechanisms to determine the external and internal decision variables, respectively.
The stochastic mechanism allows for global exploration of the entire solution space.
By applying the HMOIA to BMI problems, few prior derivations, involving only variable classification and simple problem transformation, are required.
Limited derivations render the proposed method suitable for various BMI problems.

To verify the effectiveness of the MRV, we used a series of test problems in our simulations~\cite{COMPleib,AM,SAFS1,SAFS2,OCS,LPVS,SSS,ostertag2008improved}.
For feasibility problems, while different solution methods were developed to address various BMI problems,
  the MRV was able to find a solution with $100\%$ success rates in a unified manner.
In spectral abscissa optimization, the MRV outperformed existing methods in $73\%$ of selected benchmark problems in terms of the minimum value or mean value.
 The MRV achieved better levels of performance than existing methods in $27.5\%$ and $47.8\%$ of selected $H_2$ and $H_\infty$ optimization problems, respectively,
while it yielded similar performance in the remaining problems.
As shown in~\cite{dinh2012combining} and~\cite{dinh2012inner},
 the CCDM and ICAM were relatively robust compared with other existing solution methods.
We illustrated that the MRV was able to find solutions to certain problems in which
these two robust methods failed or made little progress towards a local solution.

The main contributions of this paper are as follows.
We propose a novel global optimization approach to BMI problems, which has  not been fully investigated compared to local optimization approaches.
This approach can combat a few drawbacks  existing BMI solution methods can suffer from: using inconvenient variable expression,  being confined to particular problem structures, requiring heuristic or cumbersome prior derivations, or being incapable of addressing multiple objectives.
When the proposed classification principle is applicable, we provide a unified formulation that facilitates generating solutions to feasibility problems,  SOPs, and MOPs constrained by BMIs. To the best of our knowledge, this is the first study that provides such a unified framework.
We perform related analysis and validate the proposed MRV through a large number of benchmark problems, showing that
the proposed methodology can outperform existing solution methods in many of these BMI problems.

The rest of this paper is organized as follows.
Section~\ref{sec_notation_prob} describes the problem formulation and the principle of variable classification.
In Section~\ref{sec_thm}, preliminaries to our algorithm development are examined, including analysis of problem transformations.
Section~\ref{sec_alg} presents the HMOIA  and hence, the MRV.
Simulation results are given in Section~\ref{sec_sim}. Finally,
Section~\ref{sec_con} concludes this paper.

\section{Problem Formulation and Variable Classification}\label{sec_notation_prob}

In this section, we investigate system and control designs that are formulated as BMI problems, and propose a classification principle for decision variables that facilitates
solution search.
Under our framework, the associated cost function can be a vector-valued function, a scalar function, or a constant,
 depending on the number of  objectives involved.
By using the classification principle,
 decision variables in  BMIs are classified into two types, the internal and external variables.
Design examples are presented to illustrate how to use the proposed principle of variable classification.

The following notation and terminology are used throughout this study.
Let $\mathbb{R}$ and $\mathbb{C}$ be the sets of real and complex numbers, respectively.
For a  scalar $b\in \mathbb{C} $, $\overline{b}$ denotes the complex conjugate of $b$.
Let $[\bm{a}]_i$ and $[\bm{A}]_{ij}$ denote the~$i$th entry of the vector~$\bm{a}$ and the~$(i,j)$th entry of the matrix~$\bm{A}$, respectively.
For two vectors $\bm{a}$ and $\bm{b}$,
  $\bm{a}\leq \bm{b}$ is interpreted as  $[\bm{a}]_i\leq [\bm{b}]_i$ for all $i$.
 If $\bm{P}>0$, then
$\bm{P}$ is symmetric and positive-definite.
Similarly, $\bm{P}<0$ implies that
$\bm{P}$ is symmetric and negative-definite.
For a square matrix~$\bm{A}$,  $eig(\bm{A})$ represents the vector of all eigenvalues of $\bm{A}$ placed in a prescribed manner, and  $eig\{\bm{A}\}$ represents the set of all eigenvalues of $\bm{A}$.
The mark ``$\star$'' is used to denote the induced symmetry, e.g., $(\bm{PA},\star)=\bm{PA}+\bm{A}^T \bm{P}^T$ and
\begin{equation*}
    \left[
      \begin{array}{cc}
        \bm{A} & \bm{B}^T \\
        \bm{B} & \bm{C} \\
      \end{array}
    \right]
    =
    \left[
      \begin{array}{cc}
        \bm{A} & \star \\
        \bm{B} & \bm{C} \\
      \end{array}
    \right]
   =
       \left[
      \begin{array}{cc}
        \bm{A} & \bm{B}^T \\
        \star & \bm{C} \\
      \end{array}
    \right].
\end{equation*}
If $\bm{f}:\Omega  \rightarrow \mathbb{R}^N$ is a vector-valued function, then the MOP
\begin{equation}\label{eq_def_MOP}
 \begin{split}
   \min_{\bm{\omega}} \; &  \bm{f}(\bm{\omega})  \\
  \mbox{subject to } &  \bm{\omega} \in \Omega
 \end{split}
\end{equation}
is interpreted as vector optimization in which Pareto optimality is adopted.
The domain $\Omega$ lies in the Euclidean space $\mathbb{R}^M$ for some positive integer $M$.
The associated terminology  is presented as follows~\cite{IET_CTA_14,chiu_MO_filter,TSG_15}.

\begin{defi}[Pareto dominance]\label{def_domi}
In the decision variable space of~~(\ref{eq_def_MOP}), a point
$\bm{\omega}' \in \Omega$ dominates another point $\bm{\omega}'' \in \Omega$
if  the conditions  $[\bm{f}(\bm{\omega}')]_i  \leq [\bm{f}(\bm{\omega}'')]_i,i=1,2,...,N,$
hold true and at least one inequality is strict.
 In this case, we denote  $\bm{\omega}' \preceq_{\bm{f}}   \bm{\omega}'' $ and $\bm{f}(\bm{\omega}')\preceq \bm{f}(\bm{\omega}'')$.
A point that is not dominated by other points is termed a nondominated point.
\end{defi}

\begin{defi}[Pareto optimal set]\label{def_P_opt}
 The Pareto optimal set $\mathcal{P}^*$ of~(\ref{eq_def_MOP}) is defined as the set of all nondominated points, i.e.,
 \begin{equation*}
  \mathcal{P}^*=  \{  \bm{\omega} \in \Omega : \nexists \bm{\omega}' \in  \Omega  \mbox{ such that } \bm{\omega}' \preceq_{\bm{f}}
   \bm{\omega} \}.
 \end{equation*}
\end{defi}

\begin{defi}[Pareto front]\label{def_PF}
The Pareto front (PF) of~(\ref{eq_def_MOP}) is defined as the image of the Pareto optimal set through the mapping $\bm{f}$, i.e., $\bm{f}(\mathcal{P}^*)$ represents the PF.
\end{defi}

In our BMI-based  design problems, we use
\begin{equation}\label{eq_def_BMI_general}
  \mathcal{BMI}( \bm{\alpha},\bm{X} ) < 0
\end{equation}
to represent a BMI, where $\mathcal{BMI}( \cdot )$ is a matrix function, and
 $\bm{\alpha}$ and $\bm{X}$ are the variables.
The inequality $\mathcal{BMI}( \bm{\alpha},\bm{X} ) < 0$
becomes an LMI in the variable  $\bm{\alpha}$ given $\bm{X}$  or in the variable $\bm{X}$ given $\bm{\alpha}$.
 If more than one BMI are involved, then the notation $\mathcal{BMI}( \bm{\alpha},\bm{X} )$ represents a block-diagonal matrix such that $ \mathcal{BMI}( \bm{\alpha},\bm{X} ) < 0$ consists of all the BMIs.

To consider optimal designs in a unified framework, we  add an objective function $\bm{\mathcal{F}}(\cdot)$ to~(\ref{eq_def_BMI_general}).
From the perspective of algebra, there is no difference between $ \bm{\alpha}$ and $\bm{X} $ in~(\ref{eq_def_BMI_general})
  because they are just two coupled  variables in the BMI $\mathcal{BMI}( \bm{\alpha},\bm{X} ) < 0$.
However, to create a solution method, we
assume $\bm{\mathcal{F}}$ is a function of $\bm{\alpha}$.
The resulting BMI-based MOP can be expressed as
\begin{equation}\label{eq_def_BMI}
 \begin{split}
   \min_{\bm{\alpha},\bm{X}} \; &  \bm{\mathcal{F}}(\bm{\alpha} )  \\
  \mbox{subject to }&  \mathcal{BMI}( \bm{\alpha},\bm{X} ) < 0
 \end{split}
\end{equation}
where $ \bm{\alpha}$ is distinguished  from $\bm{X} $ by using the following  classification principle.

\textbf{Principle of Variable Classification}:
\begin{enumerate}
  \item Upper and lower bounds on the entries of variables in $\bm{\alpha}$ are available or can be obtained.
      Square matrix variables in~$\bm{\alpha}$, if any, do not have constraints on definiteness, i.e., positive or negative definiteness.
  \item Bounds on entries of variables in $\bm{X}$ are unavailable.
  \item The objective function $\bm{\mathcal{F}}$ can be expressed solely in terms of $\bm{\alpha}$.
  \item The size of~$\bm{\alpha}$ should be as small as possible.
\end{enumerate}

When the classification principle is applicable,
we term the variables $ \bm{\alpha}$ and $\bm{X}$  the external and internal decision variables, respectively.
In our principle, $\bm{\alpha}$ represents those variables (scalar and/or matrix variables) in a BMI problem that have bounds on entries. These bounds are mostly inherent from physical constraints or can be readily assigned mathematically.
 The remaining variables (scalar and/or matrix variables) are included in $\bm{X}$.
  They generally do not have upper and lower bounds on their entries,
   but there can be constraints related to positive or negative definiteness imposed on matrix variables in $\bm{X}$.
 The definiteness associated with matrix variables in $\bm{X}$ is required to ensure the system stability, which mainly distinguishes $\bm{X}$ from $\bm{\alpha}$.
A typical external variable can include controller gains and/or system parameters.
By contrast, matrix variables related to the Lyapunov theory are classified as the internal variable because bounds on the entries of these matrix variables are unavailable in practice.

The condition in which $\bm{\mathcal{F}}$ is not a function of $\bm{X}$ does not
yield a restricted problem formulation. For instance, if  $[\bm{\mathcal{F}}]_i= g(\bm{X})$  is encountered,
we may introduce a slack variable $\eta$, impose the constraint $g(\bm{X})\leq \eta$, and assign $[\bm{\mathcal{F}}]_i:= \eta$.
In this way, the objective function becomes the one with  $\bm{\alpha}$ as the only variable.
Finally, it will be shown that the BMI-constrained problem in~(\ref{eq_def_BMI})
can be reduced to an unconstrained problem in which  $\bm{\alpha}$  is the only decision variable.
Therefore, a smaller size of~$\bm{\alpha}$ means the fewer number of decision variables in the
unconstrained problem, which explains why we keep the size of~$\bm{\alpha}$ as small as possible in the classification principle.

When $\bm{\mathcal{F}}(\cdot)$ is a constant function,
it is understood that  the BMI problem in~(\ref{eq_def_BMI}) is interpreted as a feasibility problem.
Otherwise, an SOP (or MOP) is considered if
 $\bm{\mathcal{F}}(\cdot)$ is a scaler-valued (or vector-valued) function.
For a feasibility problem, it is desired to determine whether or not there exists a point~$( \bm{\alpha},\bm{X} ) $ satisfying the matrix inequality~$\mathcal{BMI}( \bm{\alpha},\bm{X} ) < 0$. If such a point exists, then the problem is feasible and any point that satisfies the matrix inequality
is  a solution (or a feasible point).
For an SOP, it is desired to search for a feasible point that achieves the minimum value of the objective function.
When an MOP is considered, the associated optimality is interpreted as Pareto optimality. In that case, the Pareto optimal set is to be determined.

To illustrate how to use the principle of variable classification, we examine a few
design examples as follows.

\subsection{Feasibility Problems}

\emph{Stability Test (ST)}: Consider a T--S fuzzy system~\cite{model_ST}
\begin{equation}\label{eq_sys_ST}
    \dot{\bm{x}}(t)= \sum_{i=1}^2 \xi_i(\bm{x}(t)) \bm{A}_i \bm{x}(t) +\bm{p}, \bm{p}^T \bm{p} \leq  \mu^2 \bm{x}(t)^T \bm{x}(t).
\end{equation}
It can be shown that the system in~(\ref{eq_sys_ST}) is stable if there exist $\tau_{\ell i j}\geq 0$ and $\bm{P}_i>0$ such that~\cite{AM}
\begin{equation}\label{eq_ST_BMI}
   \bm{A}_\ell^T \bm{P}_i+\bm{P}_i \bm{A}_\ell+ \mu^2 \bm{I}-  \sum_{j=1}^2 \tau_{\ell i j} (\bm{P}_j-\bm{P}_i) <0, \mbox{ for } \ell,i=1,2
\end{equation}
are satisfied.
According to the classification principle, the external variable cannot include matrix variables that have a constraint on definiteness.  Because $\bm{P}_i>0,i=1,2,$ are positive-definite matrix variables, they must be included in the internal variable $\bm{X}$; to yield a BMI problem, the remaining variables  $\tau_{\ell i j}$ are included in the external variable $\bm{\alpha}$.
The feasibility problem in~(\ref{eq_ST_BMI}) can then be expressed as~$\mathcal{BMI}( \bm{\alpha},\bm{X} ) < 0$ in which
$\bm{\alpha}=(\tau_{112},\tau_{121},\tau_{212},\tau_{221})$ and $\bm{X}=(\bm{P}_1,\bm{P}_2)$.

\subsection{Single-objective Optimization Problems}\label{sub_SOP}

\emph{Linear Parameter-varying Systems (LPVS)}:  Consider a linear time-varying system~\cite{model_LPVS1,model_LPVS2}
\begin{equation}\label{eq_sys_LPVS}
    \dot{\bm{x}}(t) =  \bm{A}(t) \bm{x}(t) ,\bm{A}(t) \in \mbox{ convex hull}\{\bm{A}_1,\bm{A}_2 \}
\end{equation}
where
\begin{equation*}
    \bm{A}_1=
    \left[
      \begin{array}{cc}
        0 & 1 \\
        -2 & -1 \\
      \end{array}
    \right] \mbox{ and }
    \bm{A}_2=
    \left[
      \begin{array}{cc}
        0 & 1 \\
        -2-\varsigma & -1 \\
      \end{array}
    \right].
\end{equation*}
The $\varsigma$ represents a design parameter.
The system in~(\ref{eq_sys_LPVS}) is stable if there exist $\delta_i$ and $\bm{P}_i$ satisfying~\cite{LPVS,LPVS2}
\begin{equation}\label{eq_LPVS_BMI}
 \begin{split}
 &     (1-\delta_2)(\bm{P}_2\bm{A}_1,\star)+ \delta_2 (\bm{P}_2-\bm{P}_1)<0 \\
 & (1-\delta_1)(\bm{P}_1\bm{A}_2,\star)- \delta_1 (\bm{P}_2-\bm{P}_1)<0 \\
  & (\bm{P}_1\bm{A}_1,\star)<0,(\bm{P}_2\bm{A}_2,\star)<0 \\
  & 0<\bm{P}_i<\bm{I},  0\leq \delta_i \leq 1, i=1,2.
 \end{split}
\end{equation}
For a fixed~$\varsigma$,~(\ref{eq_LPVS_BMI}) is a BMI in the variables $(\delta_1,\delta_2)$ and $(\bm{P}_1,\bm{P}_2)$.
To find the largest value of $\varsigma$ yielding a stable system, we can
 solve
\begin{equation}\label{eq_LPVS_SO}
 \begin{split}
  \max_{\varsigma,\delta_i,\bm{P}_i  } \;  &  \varsigma
     \\
  \mbox{subject to } &  (\ref{eq_LPVS_BMI}).
 \end{split}
\end{equation}
Based on the principle of variable classification,
$\bm{P}_1$ and $\bm{P}_2$ are positive-definite and must be included in the internal variable~$\bm{X}$;
to have $\mathcal{BMI}( \bm{\alpha},\bm{X} ) < 0$ as an LMI problem for a fixed $\bm{\alpha}$,
we are forced to include all the remaining variables in the external variable.
We thus have  $\bm{\alpha}=(\varsigma,\delta_1,\delta_2)$, $\bm{X}=(\bm{P}_1,\bm{P}_2)$, and $\bm{\mathcal{F}}(\bm{\alpha} )= -\varsigma$. The negative sign in $\bm{\mathcal{F}}$ has been added for the conversion of~(\ref{eq_LPVS_SO}) to the minimization form of~(\ref{eq_def_BMI}).

\subsection{Multiobjective Optimization Problems}

For a sparse linear constant output-feedback design,
the BMI problem~\cite{SSS,dinh2012combining}
\begin{equation}\label{eq_sparse}
 \begin{split}
    \min_{\beta,\bm{F},\bm{P}  } & \; -\sigma \beta +   \sum_i \sum_j  |[\bm{F}]_{ij}|  \\
    \mbox{subject to }  & \;   ( \bm{P} \bm{A}_{\bm{F}},\star)+ 2\beta \bm{P} <0 , \bm{P} >0\\
 \end{split}
\end{equation}
can be formulated, where $\bm{A}_{\bm{F}}=\bm{A}+\bm{BFC}$, $\sigma>0$ represents a prescribed weighting coefficient, and $\beta $ represents the decay rate.
The SOP in~(\ref{eq_sparse}) is interpreted as determining the controller gain~$\bm{F}$ so that
the decay rate $\beta $ is maximized and~$\bm{F}$ is kept as much sparse as possible.
One drawback of considering the single-objective formulation is that  there is no rule that can be used to assign the value of $\sigma$, which affects the values of $\beta$ and~$\bm{F}$.
In practice,
 a system designer  selects an arbitrary value of~$\sigma$  and accepts the resulting gain~$\bm{F}$.
To avoid such heuristic assignment for~$\sigma$, we can consider a multiobjective formulation that addresses two objectives in separate dimensions~\cite{B_MOEA1,B_MOEA2}:
\begin{equation}\label{eq_sparse_mod}
 \begin{split}
    \min_{\beta,\bm{F},\bm{P}  } & \;
    \left[
      \begin{array}{cc}
      -\beta  &   \sum_i \sum_j  |[\bm{F}]_{ij}|  \\
      \end{array}
    \right]^T  \\
    \mbox{subject to }  & \;   ( \bm{P} \bm{A}_{\bm{F}},\star)+ 2\beta \bm{P} <0 , \bm{P} >0.
 \end{split}
\end{equation}
According to the classification principle, $\bm{P}$ is positive-definite and hence, included in the internal variable~$\bm{X}$;
to yield a BMI problem, the remaining variables must be included in the external variable~$\bm{\alpha}$.
Referring to~(\ref{eq_def_BMI}),  we have~$\bm{\alpha}=(\beta,\bm{F})$, $\bm{X}=\bm{P}$, and
$\bm{\mathcal{F}}(\bm{\alpha} )=[   -\beta  \;   \sum_i \sum_j  |[\bm{F}]_{ij}|]^T$.
Once~(\ref{eq_sparse_mod}) has been solved, an approximate Pareto front~(APF) can be obtained and the system designer can select an appropriate~$\bm{F}$ based on the information provided by the APF.

The proposed classification principle is based on the basic properties of BMIs represented by $ \mathcal{BMI}( \bm{\alpha},\bm{X} ) < 0$. Because BMIs are nonlinear and have possibly several local optima when optimization is involved,
any deterministic algorithms can be trapped locally. To remedy this problem, stochastic algorithms can be used.
 Because $ \mathcal{BMI}( \bm{\alpha},\bm{X} ) < 0$ is a BMI,
  $ \mathcal{BMI}( \bm{\alpha},\bm{X} ) < 0$ becomes an LMI in the  variable $\bm{X}$ for a fixed value of $\bm{\alpha}$.
  For LMIs, it is well-known that deterministic algorithms such as interior-point methods are suitable for solving them efficiently. These arguments suggest that variables in BMI problems be classified into two groups so that a hybrid algorithm combining stochastic and deterministic search engines can be applied.

To integrate stochastic and deterministic search schemes, we first
explore the variable space of $\bm{\alpha}$  (external exploration) so that an LMIP  $ \mathcal{BMI}( \bm{\alpha},\bm{X} ) < 0$ in the variable $\bm{X}$ can be obtained. Once $\bm{\alpha}$ is determined, the associated variable space of $\bm{X}$ can then be searched internally and efficiently because of the convexity.
  This explains why  $\bm{\alpha}$ and $\bm{X}$ are termed external and internal variables, respectively.
Since $\bm{X}$ is relevant to the feasibility but irrelevant to the objective values,
 this internal variable can be considered hidden from the external search if
 information about the feasibility is extracted properly.
Therefore, we may reduce the original problem with variables $\bm{\alpha}$ and $\bm{X}$ to a simpler problem with only the variable $\bm{\alpha}$, and then transform the resulting problem into another form that is convenient for addressing the feasibility condition.

\section{Preliminaries to Algorithm Development}\label{sec_thm}

This section discusses the reduction and transformation, and other preliminary results that are helpful in later development of the hybrid algorithm. The section is divided into three subsections:
Section~\ref{subsec_thm} focuses on
theorems that transform the BMI problem in~(\ref{eq_def_BMI}) into an unconstrained problem with fewer decision variables;
Section~\ref{subsec_L-M} presents a solution method related to pole placement problems; and Section~\ref{subsec_density} describes an algorithm that reduces the population density of the HMOIA.

\subsection{Reduction and Equivalence Theorems}\label{subsec_thm}

Theorems in this subsection lead to an optimization problem that has a simpler form than~(\ref{eq_def_BMI}).
By using the theorems, the number of decision variables in~(\ref{eq_def_BMI}) can be reduced, and the associated problem
 can be further  transformed into an unconstrained optimization problem.
Although we adopt multiobjective formulations in the following discussions,
 the established results remain true when an SOP or a feasibility problem is considered.

Consider the eigenvalue problem (EVP)
\begin{equation}\label{eq_EVP}
 (\lambda^*( \bm{\alpha}),  \bm{X}^*( \bm{\alpha} ))
  =
  \begin{array}{l}
    \arg_{\lambda,\bm{X}}  \min_{\lambda,\bm{X}} \;  \lambda \\
    \mbox{subject to }  \mathcal{BMI}( \bm{\alpha},\bm{X} )  <  \lambda \bm{I}.
  \end{array}
\end{equation}
Because the constraint $\mathcal{BMI}( \bm{\alpha},\bm{X} ) < 0$
is a BMI, the EVP in~(\ref{eq_EVP}) is convex in the variables
$\lambda$ and $\bm{X}$
given the value of~$\bm{\alpha}$. (For a fixed value of $\lambda$,  the EVP can thus be solved by interior-point methods.)
In~(\ref{eq_EVP}), we denote~$ (\lambda^*( \bm{\alpha}),  \bm{X}^*( \bm{\alpha} ))$ as the pair that achieves the minimum.
Both $\lambda^*( \bm{\alpha})$ and $\bm{X}^*( \bm{\alpha} )$ are regarded as a function of~$\bm{\alpha}$.
The following lemma relates the value of $\lambda^*( \bm{\alpha})$ to the feasibility of~(\ref{eq_def_BMI}).

\begin{lma}\label{lma_EVP}
The BMI problem in~(\ref{eq_def_BMI}) is feasible if and only if  an $\tilde{\bm{\alpha}}$ exists
such that the value of $\lambda^*( \tilde{\bm{\alpha}})$ in~(\ref{eq_EVP}) is negative, i.e., $\lambda^*( \tilde{\bm{\alpha}})  <0$.
\end{lma}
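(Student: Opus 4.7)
The plan is to prove the biconditional by unpacking the definitions of the eigenvalue problem in~(\ref{eq_EVP}) and of feasibility for the BMI problem in~(\ref{eq_def_BMI}); the statement is essentially a restatement of feasibility in terms of the sign of the optimal ``slack'' $\lambda^*(\bm{\alpha})$, so the argument should be short and mostly bookkeeping rather than requiring any new machinery.

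For the sufficiency direction (``$\Leftarrow$''), I would assume there exists $\tilde{\bm{\alpha}}$ with $\lambda^*(\tilde{\bm{\alpha}}) < 0$. By the definition of the EVP, the minimizing pair $(\lambda^*(\tilde{\bm{\alpha}}), \bm{X}^*(\tilde{\bm{\alpha}}))$ satisfies $\mathcal{BMI}(\tilde{\bm{\alpha}}, \bm{X}^*(\tilde{\bm{\alpha}})) < \lambda^*(\tilde{\bm{\alpha}}) \bm{I}$. Chaining this with $\lambda^*(\tilde{\bm{\alpha}}) \bm{I} < 0$ (which follows immediately from $\lambda^*(\tilde{\bm{\alpha}}) < 0$) yields $\mathcal{BMI}(\tilde{\bm{\alpha}}, \bm{X}^*(\tilde{\bm{\alpha}})) < 0$, so the pair $(\tilde{\bm{\alpha}}, \bm{X}^*(\tilde{\bm{\alpha}}))$ is a feasible point of~(\ref{eq_def_BMI}).

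For the necessity direction (``$\Rightarrow$''), I would start from a feasible pair $(\hat{\bm{\alpha}}, \hat{\bm{X}})$ satisfying $\mathcal{BMI}(\hat{\bm{\alpha}}, \hat{\bm{X}}) < 0$. Letting $\hat{\mu}$ denote the largest eigenvalue of the symmetric matrix $\mathcal{BMI}(\hat{\bm{\alpha}}, \hat{\bm{X}})$, negative definiteness gives $\hat{\mu} < 0$. Choose any $\hat{\lambda} \in (\hat{\mu}, 0)$; then $\mathcal{BMI}(\hat{\bm{\alpha}}, \hat{\bm{X}}) < \hat{\lambda} \bm{I}$, so the pair $(\hat{\lambda}, \hat{\bm{X}})$ is feasible for the EVP in~(\ref{eq_EVP}) at $\bm{\alpha} = \hat{\bm{\alpha}}$. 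By optimality, $\lambda^*(\hat{\bm{\alpha}}) \leq \hat{\lambda} < 0$, so setting $\tilde{\bm{\alpha}} := \hat{\bm{\alpha}}$ concludes the argument.

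The only minor subtlety I anticipate is the handling of strict versus non-strict inequality: the EVP involves the strict inequality $\mathcal{BMI}(\bm{\alpha}, \bm{X}) < \lambda \bm{I}$, so in the necessity direction one must insert a slightly relaxed $\hat{\lambda} > \hat{\mu}$ rather than simply taking $\hat{\lambda} = \hat{\mu}$. This is the main place where some care is needed, but the gap between $\hat{\mu}$ and $0$ always leaves room to pick such an $\hat{\lambda}$, so the argument goes through without further complications.
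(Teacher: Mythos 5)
Your proposal is correct. Note that the paper does not actually write out a proof of Lemma~\ref{lma_EVP}: it simply states that the claim ``can be readily verified by a slight modification of the proof in Lemma~1 of'' the cited references. Your argument is the standard self-contained verification that those references would carry out, and both directions are sound: the sufficiency direction chains $\mathcal{BMI}(\tilde{\bm{\alpha}},\bm{X}^*(\tilde{\bm{\alpha}}))<\lambda^*(\tilde{\bm{\alpha}})\bm{I}<0$, and the necessity direction correctly inserts a strict slack $\hat{\lambda}\in(\hat{\mu},0)$ to respect the strict inequality in the EVP constraint (taking $\hat{\lambda}=\hat{\mu}$ would indeed fail, since $\mathcal{BMI}(\hat{\bm{\alpha}},\hat{\bm{X}})\leq\hat{\mu}\bm{I}$ holds only non-strictly along the top eigenvector). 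The one technicality you inherit from the paper rather than resolve is the assumption that the infimum in~(\ref{eq_EVP}) is attained by a pair $(\lambda^*(\bm{\alpha}),\bm{X}^*(\bm{\alpha}))$; because the constraint set $\{(\lambda,\bm{X}):\mathcal{BMI}(\bm{\alpha},\bm{X})<\lambda\bm{I}\}$ is open, the minimum need not be achieved, but your sufficiency argument is easily repaired by taking any feasible $(\lambda,\bm{X})$ with $\lambda^*(\tilde{\bm{\alpha}})\leq\lambda<0$ in place of the exact minimizer, so this does not affect the validity of the lemma.
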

\begin{proof}
It can be readily verified by a slight modification of the proof in Lemma~1 of~\cite{IET_CTA_14}  or~\cite{CACS_14}.
\end{proof}

The following theorem follows from using Lemma~\ref{lma_EVP}.

\begin{thm}[Reduction Theorem]\label{thm_reduce}
There exists a pair $(\tilde{\bm{\alpha}},\tilde{\bm{X}})$ that is Pareto optimal in~(\ref{eq_def_BMI}) if and only if (denoted by $\Leftrightarrow$) $\tilde{\bm{\alpha}}$ is Pareto optimal in
\begin{equation}\label{eq_BMI_reduced}
 \begin{split}
   \min_{\bm{\alpha}} \; &   \bm{\mathcal{F}}(\bm{\alpha} ) \\
  \mbox{subject to }&   \lambda^*(\bm{\alpha})<0.
 \end{split}
\end{equation}
\end{thm}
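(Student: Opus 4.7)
The plan is to exploit the key asymmetry in the formulation: the objective $\bm{\mathcal{F}}$ depends only on $\bm{\alpha}$, while the internal variable $\bm{X}$ enters only through the BMI constraint. So for any feasible $\bm{\alpha}$, the particular choice of $\bm{X}$ is irrelevant to objective values and only affects whether the pair is admissible. The reformulated constraint $\lambda^*(\bm{\alpha})<0$ in~(\ref{eq_BMI_reduced}) is precisely the projection of the feasible set of~(\ref{eq_def_BMI}) onto the $\bm{\alpha}$-coordinate, by (a localized reading of) Lemma~\ref{lma_EVP}: $\bm{\alpha}$ admits some $\bm{X}$ with $\mathcal{BMI}(\bm{\alpha},\bm{X})<0$ iff $\lambda^*(\bm{\alpha})<0$, in which case $\bm{X}^*(\bm{\alpha})$ from~(\ref{eq_EVP}) is a witness.

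For the forward direction, I would take a Pareto optimal pair $(\tilde{\bm{\alpha}},\tilde{\bm{X}})$ of~(\ref{eq_def_BMI}), note that feasibility of the pair gives $\lambda^*(\tilde{\bm{\alpha}})<0$ via Lemma~\ref{lma_EVP}, and argue by contradiction that $\tilde{\bm{\alpha}}$ is Pareto optimal in~(\ref{eq_BMI_reduced}). Assuming some $\bm{\alpha}'$ with $\lambda^*(\bm{\alpha}')<0$ satisfies $\bm{\mathcal{F}}(\bm{\alpha}')\preceq \bm{\mathcal{F}}(\tilde{\bm{\alpha}})$, lift it to the pair $(\bm{\alpha}',\bm{X}^*(\bm{\alpha}'))$, which is feasible in~(\ref{eq_def_BMI}) by construction of $\bm{X}^*$, and observe that it dominates $(\tilde{\bm{\alpha}},\tilde{\bm{X}})$ in~(\ref{eq_def_BMI}) because the objective depends only on $\bm{\alpha}$, giving the desired contradiction.

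For the reverse direction, take $\tilde{\bm{\alpha}}$ Pareto optimal in~(\ref{eq_BMI_reduced}); by Lemma~\ref{lma_EVP} the pair $(\tilde{\bm{\alpha}},\bm{X}^*(\tilde{\bm{\alpha}}))$ is feasible in~(\ref{eq_def_BMI}), so such a witness pair exists. To show it is Pareto optimal there, suppose instead some feasible $(\bm{\alpha}',\bm{X}')$ dominates it; then $\mathcal{BMI}(\bm{\alpha}',\bm{X}')<0$ implies $\lambda^*(\bm{\alpha}')<0$ by Lemma~\ref{lma_EVP}, so $\bm{\alpha}'$ is feasible in~(\ref{eq_BMI_reduced}) and, since objectives depend only on $\bm{\alpha}$, it dominates $\tilde{\bm{\alpha}}$ there, contradicting Pareto optimality of $\tilde{\bm{\alpha}}$.

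I do not anticipate a genuine technical obstacle; the proof is essentially a projection/lifting argument between the two problems. The only point that requires care is being explicit about Pareto dominance (Definition~\ref{def_domi}): since $\bm{\mathcal{F}}(\bm{\alpha}',\bm{X}')=\bm{\mathcal{F}}(\bm{\alpha}')$, any dominance relation at the pair level is inherited verbatim at the $\bm{\alpha}$-level and vice versa, which is what makes both directions symmetric. A minor subtlety is that the theorem quantifies the witness $\tilde{\bm{X}}$ existentially on one side but requires Pareto optimality of $\tilde{\bm{\alpha}}$ on the other; the canonical choice $\tilde{\bm{X}}=\bm{X}^*(\tilde{\bm{\alpha}})$ from the EVP in~(\ref{eq_EVP}) supplies the witness whenever one is needed.
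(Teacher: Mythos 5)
Your proposal is correct and follows essentially the same route as the paper: both directions use Lemma~\ref{lma_EVP} to identify $\lambda^*(\bm{\alpha})<0$ with the existence of a feasible $\bm{X}$, lift via the witness $\bm{X}^*(\bm{\alpha})$ from~(\ref{eq_EVP}), and argue each implication by contradiction using the fact that $\bm{\mathcal{F}}$ depends only on $\bm{\alpha}$. If anything, you are slightly more explicit than the paper in supplying the existential witness $\tilde{\bm{X}}=\bm{X}^*(\tilde{\bm{\alpha}})$ in the sufficiency direction, which the paper leaves implicit.
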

\begin{proof}
We first prove necessity ($\Rightarrow$). By Lemma~\ref{lma_EVP}, we have $\lambda^*( \tilde{\bm{\alpha}})  <0$ and hence,
$\tilde{\bm{\alpha}}$ is a feasible point of~(\ref{eq_BMI_reduced}). Let us proceed by contraposition.
Suppose that there exists an $\bm{\alpha}'$ dominating $ \tilde{\bm{\alpha}}$ in~(\ref{eq_BMI_reduced}), i.e.,
\begin{equation}\label{eq_proof_red}
   \bm{\alpha}' \preceq_{\bm{\mathcal{F}}}  \tilde{\bm{\alpha}}, \lambda^*( \tilde{\bm{\alpha}})  <0, \mbox{ and }\lambda^*(\bm{\alpha}')  <0.
\end{equation}
However, the conditions in~(\ref{eq_proof_red}) implies that
\begin{equation*}
  ( \bm{\alpha}', \bm{X}^*(\bm{\alpha}') )\preceq_{\bm{\mathcal{F}}} (\tilde{\bm{\alpha}},\tilde{\bm{X}})
\end{equation*}
which yields a contradiction.

To prove sufficiency ($\Leftarrow$), we again use  contraposition. Suppose that there exists  a pair $(\bm{\alpha}',\bm{X}')$ dominating
$(\tilde{\bm{\alpha}},\tilde{\bm{X}})$, i.e.,
\begin{equation}\label{eq_proof_red2}
\small{
( \bm{\alpha}', \bm{X}' )\preceq_{\bm{\mathcal{F}}} (\tilde{\bm{\alpha}},\tilde{\bm{X}}), \mathcal{BMI}( \bm{\alpha}', \bm{X}' )  <0, \mbox{ and }\mathcal{BMI}(\tilde{\bm{\alpha}},\tilde{\bm{X}})<0 .}
\end{equation}
By Lemma~\ref{lma_EVP}, the conditions in~(\ref{eq_proof_red2}) are equivalent to those in~(\ref{eq_proof_red}), which implies that $\tilde{\bm{\alpha}}$ is not Pareto optimal in~(\ref{eq_BMI_reduced}). However, this contradicts the Pareto optimality of~$\tilde{\bm{\alpha}}$.
\end{proof}

According to Theorem~\ref{thm_reduce}, the BMI problem in~(\ref{eq_def_BMI}) with $\bm{\alpha}$ and $\bm{X}$
as the decision variables
 can reduce to~(\ref{eq_BMI_reduced}) with $\bm{\alpha}$  as the only decision variable.
That is, the number of decision variables is reduced, which explains why Theorem~\ref{thm_reduce} is termed the Reduction Theorem.
In the theorem, the expression ``Pareto optimal'' is replaced by
 ``feasible'' if we consider a feasibility problem. In this case,
the theorem is exactly the same as Lemma~\ref{lma_EVP}.
Similarly, we replace ``Pareto optimality'' with conventional optimality when an SOP is encountered.

By Theorem~\ref{thm_reduce}, we can solve~(\ref{eq_BMI_reduced}) for a BMI-based  design in place of~(\ref{eq_def_BMI}).
We  further consider an unconstrained problem
that is equivalent to~(\ref{eq_BMI_reduced}).

\begin{thm}[Equivalence Theorem]\label{thm_equiv}
Let
\begin{equation}\label{eq_F_tilde}
\tilde{ \bm{\mathcal{F}}}(\bm{\alpha} )=
     \left[
   \begin{array}{cc}
    \bm{\mathcal{F}}(\bm{\alpha} )^T   & \max\{0, \lambda^*(\bm{\alpha}) \}\\
   \end{array}
 \right]^T
\end{equation}
 where $\max\{0, \lambda^*(\bm{\alpha}) \}$ represents the maximum element in the set~$\{0, \lambda^*(\bm{\alpha}) \}$.
A point $\tilde{\bm{\alpha}}$ is Pareto optimal in~(\ref{eq_BMI_reduced})
 if and only if (denoted by $\Leftrightarrow$) $\tilde{\bm{\alpha}}$ satisfies the condition $\max\{0, \lambda^*(\tilde{\bm{\alpha}}) \}=0$ and is Pareto optimal in
 \begin{equation}\label{eq_BMI_equi}
   \min_{\bm{\alpha}} \; \tilde{ \bm{\mathcal{F}}}(\bm{\alpha} ).
\end{equation}
\end{thm}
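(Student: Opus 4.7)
The plan is to mirror the contrapositive strategy used in the proof of Theorem~\ref{thm_reduce}, exploiting the fact that the augmented last component $\max\{0,\lambda^*(\bm{\alpha})\}$ vanishes precisely on the feasible set of~(\ref{eq_BMI_reduced}) and is strictly positive elsewhere. The idea is that this extra objective acts as a ``feasibility indicator'' inside the Pareto ordering, so Pareto dominance in~(\ref{eq_BMI_equi}) collapses to Pareto dominance in~(\ref{eq_BMI_reduced}) whenever both endpoints lie in the zero level set of that last coordinate.

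For necessity ($\Rightarrow$), I would start by noting that any Pareto optimal $\tilde{\bm{\alpha}}$ in~(\ref{eq_BMI_reduced}) is in particular feasible, so $\lambda^*(\tilde{\bm{\alpha}})<0$ and hence $\max\{0,\lambda^*(\tilde{\bm{\alpha}})\}=0$. This is the first conclusion required. Then, proceeding by contraposition, I would assume there exists $\bm{\alpha}'$ with $\tilde{\bm{\mathcal{F}}}(\bm{\alpha}')\preceq \tilde{\bm{\mathcal{F}}}(\tilde{\bm{\alpha}})$. Reading off the last coordinate gives $\max\{0,\lambda^*(\bm{\alpha}')\}\leq 0$, hence $\lambda^*(\bm{\alpha}')\leq 0$, so $\bm{\alpha}'$ is feasible for~(\ref{eq_BMI_reduced}). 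Since the last coordinates of $\tilde{\bm{\mathcal{F}}}(\bm{\alpha}')$ and $\tilde{\bm{\mathcal{F}}}(\tilde{\bm{\alpha}})$ are both zero, the strict inequality required by the definition of Pareto dominance must occur in one of the first~$N$ coordinates, i.e., in $\bm{\mathcal{F}}$. This yields $\bm{\alpha}'\preceq_{\bm{\mathcal{F}}}\tilde{\bm{\alpha}}$ in~(\ref{eq_BMI_reduced}), contradicting the Pareto optimality of $\tilde{\bm{\alpha}}$.

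For sufficiency ($\Leftarrow$), the hypothesis $\max\{0,\lambda^*(\tilde{\bm{\alpha}})\}=0$ immediately gives $\lambda^*(\tilde{\bm{\alpha}})\leq 0$, making $\tilde{\bm{\alpha}}$ feasible in~(\ref{eq_BMI_reduced}) (invoking Lemma~\ref{lma_EVP} to relate $\lambda^*$-negativity to the original BMI feasibility). Again by contraposition, suppose some $\bm{\alpha}'$ dominates $\tilde{\bm{\alpha}}$ in~(\ref{eq_BMI_reduced}). Then $\lambda^*(\bm{\alpha}')<0$ gives $\max\{0,\lambda^*(\bm{\alpha}')\}=0$, which matches the last coordinate of $\tilde{\bm{\mathcal{F}}}(\tilde{\bm{\alpha}})$, while $\bm{\mathcal{F}}(\bm{\alpha}')\preceq \bm{\mathcal{F}}(\tilde{\bm{\alpha}})$ with at least one strict inequality transfers directly to the first~$N$ coordinates of $\tilde{\bm{\mathcal{F}}}$. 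Hence $\bm{\alpha}'\preceq_{\tilde{\bm{\mathcal{F}}}}\tilde{\bm{\alpha}}$, contradicting Pareto optimality of $\tilde{\bm{\alpha}}$ in~(\ref{eq_BMI_equi}).

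The main subtlety I anticipate lies in the bookkeeping around the strict inequality clause in the definition of Pareto dominance (Definition~\ref{def_domi}): it must be shown that the strictness either survives the projection from the $(N{+}1)$-dimensional augmented objective to the $N$-dimensional original objective, or is created anew. The argument that both last coordinates equal zero and therefore cannot supply the strict inequality is the key observation that makes this work cleanly. A secondary technical point is the strict-versus-nonstrict distinction for $\lambda^*<0$ on the boundary case $\lambda^*=0$; I would simply observe that any feasible dominator in~(\ref{eq_BMI_reduced}) necessarily satisfies the strict condition, so the boundary case does not obstruct either direction of the argument.
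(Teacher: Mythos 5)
Your proof follows essentially the same route as the paper's: both directions proceed by contraposition, using the observation that the augmented coordinate $\max\{0,\lambda^*(\cdot)\}$ vanishes exactly on the feasible set of~(\ref{eq_BMI_reduced}), so that dominance in the augmented unconstrained problem transfers to dominance in the reduced problem and vice versa. If anything, your explicit bookkeeping of where the strict inequality in Definition~\ref{def_domi} must occur, and your flagging of the boundary case $\lambda^*(\bm{\alpha}')=0$ (which the paper's proof passes over silently when it asserts that $\max\{0,\lambda^*(\bm{\alpha}')\}=0$ implies the strict condition $\lambda^*(\bm{\alpha}')<0$ appearing in~(\ref{eq_proof_red})), is slightly more careful than the published argument.
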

\begin{proof}
We prove necessity ($\Rightarrow$).
Since $\tilde{\bm{\alpha}}$ is Pareto optimal in~(\ref{eq_BMI_reduced}), we have $\lambda^*(\bm{\alpha})<0$ and thus
$\tilde{\bm{\alpha}}$ satisfies the condition $\max\{0, \lambda^*(\tilde{\bm{\alpha}}) \}=0$.
We use contraposition.
Suppose that $\tilde{\bm{\alpha}}$ is not Pareto optimal in~(\ref{eq_BMI_equi}).
There must exist an $\bm{\alpha}'$ such that $\bm{\alpha}' \preceq_{\tilde{ \bm{\mathcal{F}}}}   \tilde{\bm{\alpha}}$, yielding
$\max\{0, \lambda^*(\bm{\alpha}') \}=0$.
However, this implies that the conditions in~(\ref{eq_proof_red}) hold true, i.e., $\tilde{\bm{\alpha}}$ is not Pareto optimal in~(\ref{eq_BMI_reduced}), which yields a contradiction.

We now prove sufficiency ($\Leftarrow$) and again use contraposition.
Suppose that
 an $\bm{\alpha}'$ exists such that   the conditions in~(\ref{eq_proof_red})
hold true. This implies $\bm{\alpha}'  \preceq_{\tilde{ \bm{\mathcal{F}}}}  \tilde{\bm{\alpha}}$, which yields a contradiction.
\end{proof}

Theorem~\ref{thm_equiv} is termed the Equivalence Theorem because it establishes an equivalence relation between~(\ref{eq_BMI_equi}) and~(\ref{eq_BMI_reduced}).
According to Reduction and Equivalence Theorems,
 we can solve the unconstrained problem in~(\ref{eq_BMI_equi}) that has
fewer decision variables
than the original BMI problem in~(\ref{eq_def_BMI}).

\subsection{Levenberg--Marquardt Method}\label{subsec_L-M}

Pole placement problems occur frequently in controller designs~\cite{pole_place,pole_place2}.
In this subsection, we investigate a trust region Levenberg--Marquardt method that can be used for pole placement
 when the system matrix $\bm{A}+\bm{BFC}$  is encountered. In this situation,  matrices $\bm{A}$, $\bm{B}$, and $\bm{C}$ are known, and $\bm{F}$ is the design parameter that must be determined.

Suppose that $\bm{A} \in \mathbb{R}^{n_x \times n_x  }$ and  $\bm{F} \in \mathbb{R}^{n_u \times n_y  }$  in which  $n_x,n_y,$ and $n_u$ represent the dimensions of the state vector, physical output, and control input, respectively.
To facilitate the following discussions, we reshape the gain matrix $\bm{F}$ into an $n_u n_y \times 1$ vector~$\bm{q}$,
and denote  $\bm{A}(\bm{q})=\bm{A}+\bm{BFC}\in \mathbb{R}^{n_x \times n_x  }$
and
\begin{equation}\label{eq_cost_L_M}
h(\bm{q},\bm{\lambda}^{pre})=\frac{1}{2}||eig(\bm{A}(\bm{q}))-\bm{\lambda}^{pre}  ||_2^2
\end{equation}
where $eig(\bm{A}(\bm{q}))$ represents the vector of eigenvalues of~$\bm{A}(\bm{q})$
and $\bm{\lambda}^{pre}$ is a prescribed vector of poles.
The entries of the vector~$eig(\bm{A}(\bm{q}))$ in~(\ref{eq_cost_L_M}) is placed in a way that the
minimum norm is achieved. The associated pole placement problem can be formulated as
\begin{equation}\label{eq_L_M_prob}
  \bm{q}^*(\bm{\lambda}^{pre}) =\arg_{\bm{q}} \min_{\bm{q}}\; h(\bm{q},\bm{\lambda}^{pre})
\end{equation}
which is an unconstrained nonlinear least squares problem.
To apply the trust region Levenberg--Marquardt method to solve~(\ref{eq_L_M_prob}),
we need the first partial derivatives  and an approximate Hessian matrix of $h(\bm{q},\bm{\lambda}^{pre})$ in~(\ref{eq_cost_L_M}).
For the vector of eigenvalues  $eig(\bm{A}(\bm{q}))$,
let $\bm{X}_e(\bm{q}) \in \mathbb{C}^{ n_x \times n_x }$ be the matrix consisting of the associated eigenvectors such that
\begin{equation*}
\begin{split}
   & \bm{A}(\bm{q})\bm{X}_e(\bm{q})\\
 { } = { }   & \bm{X}_e(\bm{q}) diag([eig(\bm{A}(\bm{q}))]_1,[eig(\bm{A}(\bm{q}))]_2,...,[eig(\bm{A}(\bm{q}))]_{n_x}).
\end{split}
\end{equation*}
We have~\cite{pole_derive,pole_derive2}
\begin{equation*}
    \frac{\partial [eig(\bm{A}(\bm{q}))]_i}{ \partial [\bm{q}]_m}=
    \left[
      \begin{array}{c}
     \bm{X}_e(\bm{q})^{-1}  \frac{\partial \bm{A}(\bm{q})  }{\partial [\bm{q}]_m }     \bm{X}_e(\bm{q})   \\
      \end{array}
    \right]_{ii}
\end{equation*}
for  $i=1,2,...,n_x$, and $m=1,2,...,n_u n_y$.
The first partial derivatives $\partial h(\bm{q},\bm{\lambda}^{pre})/\partial [\bm{q}]_m$  and approximate Hessian matrix  $\bm{H}$
can be expressed as
\begin{equation}\label{eq_1st_2nd_partial}
\begin{split}
\frac{\partial h(\bm{q},\bm{\lambda}^{pre}) }{\partial [\bm{q}]_m }   { }={ } &
Re
\left\{
  \begin{array}{c}
  \sum\limits_{i=1}^{n_x}  \overline{(  [eig(\bm{A}(\bm{q})) ]_i   -[\bm{\lambda}^{pre}]_i )}
    \\
  \end{array}
\right.
 \\
 &
\left.
  \begin{array}{c}
  \times \frac{\partial [eig(\bm{A}(\bm{q}))]_i}{ \partial [\bm{q}]_m}
    \\
  \end{array}
\right\}    \mbox{ and }\\
  [\bm{H}]_{m\ell} { }={ } &
Re
\left\{
  \begin{array}{c}
  \sum_{i=1}^{n_x} \overline{( \frac{\partial [eig(\bm{A}(\bm{q}))]_i}{ \partial [\bm{q}]_m})}    (\frac{\partial [eig(\bm{A}(\bm{q}))]_i}{ \partial [\bm{q}]_{\ell}})
    \\
  \end{array}
\right\}
\end{split}
\end{equation}
respectively.
The Levenberg--Marquardt algorithm for the pole placement problem in~(\ref{eq_L_M_prob}) is described as follows.\footnote{The reader can refer to Theorems~1--4 in~\cite{TR_1} or Theorems 4.8, 4.9, and 6.4 in~\cite{TR_2}
for the convergence analysis of the trust region method.}

\noindent\rule{8.8cm}{0.4pt}\\
\textbf{Trust Region Levenberg--Marquardt Algorithm}~\cite{TR_1,TR_2}\\
\noindent\rule{8.8cm}{0.4pt}\\
Given $\hat{\Delta}>0$, $\Delta_0\in (0,\hat{\Delta})$, and $\eta\in[0,1/4)$ \\
\textbf{For} $k=1,2,...$
\begin{itemize}
\item[]  Evaluate $\bm{p}_k$ by solving
\begin{equation}\label{eq_app_prop}
  \begin{split}
\bm{p}_k=\arg_{\bm{p}}  \min_{\bm{p}} \; &  m_k(\bm{p})  \\
  \mbox{subject to }&  || \bm{p} ||_2 \leq \Delta_k
 \end{split}
\end{equation}
where $\Delta_k$ represents the current trust region radius, and
\begin{equation}\label{eq_m_k}
    m_k(\bm{p})=h(\bm{q}_k,\bm{\lambda}^{pre})+\nabla h(\bm{q}_k,\bm{\lambda}^{pre})^T\bm{p}+  \bm{p}^T \bm{H}_k  \bm{p}
\end{equation}
with entries of $\nabla h(\bm{q}_k,\bm{\lambda}^{pre})$ and $\bm{H}_k $ defined in~(\ref{eq_1st_2nd_partial}).
\item[]  Evaluate
\begin{equation}\label{eq_rho_k}
\phi_k=
\frac{h(\bm{q}_k,\bm{\lambda}^{pre}) - h(\bm{q}_k+\bm{p}_k,\bm{\lambda}^{pre}) }{m_k(0)- m_k(\bm{p}_k)}.
\end{equation}
\item[] \textbf{If} $\phi_k<1/4$
           \begin{equation*}
            \Delta_{k+1}:= \Delta_{k}/4
           \end{equation*}
\textbf{Else}
 \begin{itemize}
 \item[]  \textbf{If} $\phi_k>3/4$ and  $||\bm{p}_k||_2=\Delta_k$
           \begin{equation*}
            \Delta_{k+1}:=\min \{2\Delta_{k},\hat{\Delta} \}
           \end{equation*}
           \textbf{Else}
             \begin{equation*}
            \Delta_{k+1}:=   \Delta_{k}
           \end{equation*}
           \textbf{End If}
  \end{itemize}
  \textbf{End If}
  \item[]  \textbf{If} $\phi_k>\eta$
           \begin{equation*}
           \bm{q}_{k+1}:= \bm{q}_{k} + \bm{p}_{k}
           \end{equation*}
           \textbf{Else}
           \begin{equation*}
            \bm{q}_{k+1}:= \bm{q}_{k}
           \end{equation*}
           \textbf{End If}
\end{itemize}
\textbf{ End For} \\
\noindent\rule{8.8cm}{0.4pt}

\subsection{Density Reduction Algorithm}\label{subsec_density}

When an evolutionary algorithm searches for Pareto optimal solutions to an MOP,  less crowded points must be preserved so that population diversity can be ensured.
To this end, we  estimate the density of current population and remove points that lie in a dense region.
We
 denote $\mathcal{A}(t_c)$ as the current population with the cardinality $|\mathcal{A}(t_c)|$,
$N_{nom}$ as the nominal size of the population, and $\bm{\alpha}$ as an element of $\mathcal{A}(t_c)$.
Suppose that $\tilde{ \bm{\mathcal{F}}}(\bm{\alpha} )\in \mathbb{R}^{N+1}$.  The process of removing points from a dense region is termed density reduction, which can be performed by the following algorithm modified from~\cite{AIS1}.

\noindent\rule{8.8cm}{0.4pt}\\
\textbf{Density Reduction Algorithm}\\
\noindent\rule{8.8cm}{0.4pt}\\
\textbf{While} $|\mathcal{A}(t_c)|>N_{nom}$ \textbf{do}\\
 Evaluate
\begin{equation}\label{eq_fij}
\begin{split}
   &       c_{i,j}(\bm{\alpha}_i) \\
  =   &
 \left\{
        \begin{array}{ll}
      \frac{ \min \Gamma_{j}^+(\bm{\alpha}_i) -  \max   \Gamma_{j}^-(\bm{\alpha}_i)    }{\mathcal{F}_j^{max}(\mathcal{A}(t_c))-\mathcal{F}^{min}_j(\mathcal{A}(t_c))}  , & \hbox{if } \Gamma_{j}^+(\bm{\alpha}_i), \Gamma_{j}^-(\bm{\alpha}_i) \neq \emptyset \\
          N, & \hbox{otherwise}
        \end{array}
      \right.
\end{split}
\end{equation}
for all $\bm{\alpha}_i\in \mathcal{A}(t_c) $ and $j=1,2,...,N$,
where
\begin{equation*}
    \begin{split}
 &    \mathcal{F}^{max}_j(\mathcal{A}(t_c)){ }={ } \max_{ \bm{\alpha}\in \mathcal{A}(t_c)} [\bm{\mathcal{F}}(\bm{\alpha})]_j, \\
 &    \mathcal{F}^{min}_j(\mathcal{A}(t_c) ) { }={ }  \min_{ \bm{\alpha}\in \mathcal{A}(t_c)} [\bm{\mathcal{F}}(\bm{\alpha})]_j, \\
 &    \Gamma_{j}^+(\bm{\alpha}_i) { }={ }  \{ [\bm{\mathcal{F}}(\bm{\alpha})]_j: \bm{\alpha}  \in \mathcal{A}(t_c),  [\bm{\mathcal{F}}(\bm{\alpha})]_j > [\bm{\mathcal{F}}(\bm{\alpha}_i)]_j   \}, \mbox{ and} \\
      &\Gamma_{j}^-(\bm{\alpha}_i) { }={ }  \{[\bm{\mathcal{F}}(\bm{\alpha})]_j: \bm{\alpha}  \in \mathcal{A}(t_c),  [\bm{\mathcal{F}}(\bm{\alpha})]_j < [\mathcal{F}(\bm{\alpha}_i)]_j   \}.  \\
    \end{split}
\end{equation*}
 Evaluate
\begin{equation*}
    (\bm{\alpha}_i)_{av}=  \sum_{j=1}^{N} c_{i,j}(\bm{\alpha}_i)
\end{equation*}
for all $\bm{\alpha}_i\in \mathcal{A}(t_c) $.\\
Remove the element  $\bm{\alpha}$  that yields the least $(\bm{\alpha})_{av}$  from $\mathcal{A}(t_c)$  and thus the size of $\mathcal{A}(t_c)$ is reduced by one.\\
\textbf{End While}\\
\noindent\rule{8.8cm}{0.4pt}

\section{Proposed Algorithm}\label{sec_alg}

This section presents the HMOIA used to solve~(\ref{eq_BMI_equi}).
The algorithm is a hybrid because it integrates both stochastic and deterministic search schemes.
For example,  the Levenberg--Marquardt algorithm, density reduction algorithm,
and interior-point methods are deterministic algorithms, while artificial immune systems used as the underlying structure of the HMOIA are stochastic search methods.
There are a few reasons why the immune search scheme was adopted in our main algorithm structure.
First, its potential to provide novel solutions has been illustrated in several studies~\cite{AISp1,AISp2,AISp3}.
Second, the immune search scheme is robust and outperforms some existing MOEAs or at least performs equally well in most benchmark MOPs~\cite{AIS1}.
(In~\cite{AIS1}, the MO immune algorithm was compared to PAES, PESA, NSGA-II~\cite{NSGA2}, SPEA2, MOEA/D~\cite{MOEAD2}, and ACSAMO in terms of convergence, diversity, uniformity, and coverage.) Finally and most importantly, the artificial immune system is a ``highly parallel intelligent system''~\cite{para1,para2,B_MOEA1} and thus a parallel computation scheme can be readily developed.
This is useful for solving BMI constraints that requires a large amount of computational power in general.
Despite these reasons, it is worth mentioning that other advanced MOEAs can also be adopted  if modified properly.\footnote{Proper modifications may include incorporation of pole-placement techniques into the search engine and a design of a mechanism that ensures legitimate pole placement.}

The pseudocode of the HMOIA is presented as follows.

\noindent\rule{8.8cm}{0.4pt}\\
\textbf{Pseudocode of the Proposed HMOIA}\\
\noindent\rule{8.8cm}{0.4pt}\\
\textbf{Input: }MOP in~(\ref{eq_BMI_equi})\\
Prescribe bounds on the external variable and initialize the population\\
Evaluate the objective function\\
Remove dominated points\\
Let $t_c:=1$\\
\textbf{While} $t_c\leq t_{max}$ \textbf{do}
\begin{itemize}
\item[] \textbf{If} $N=0$ and $\exists \bm{\alpha} \in \mathcal{A}(t_c)$ such that $\tilde{{\bm{\mathcal{F}}}}(\bm{\alpha})=0$\\
Let $t_c:= t_{max}$\\
       \textbf{Else}
\begin{itemize}
\item[]  Perform the hyper-mutation operation
\item[]  Evaluate the objective function
\item[]  Update population
\end{itemize}
\textbf{End If}
\item[] Let $t_c:=t_c+1$
\end{itemize}
\textbf{End While}\\
Remove points~$\bm{\alpha}$ that have $[\tilde{\bm{\mathcal{F}}}(\bm{\alpha})]_{N+1}>0$\\
Remove dominated points\\
\textbf{Output: }Approximate Pareto optimal solutions and Pareto front\\
\noindent\rule{8.8cm}{0.4pt}\\

In the following subsections,
we elaborate key steps of the algorithm and summarize the MRV.

\subsection{Prescribe Bounds on External Variable and Initialize the Population}\label{subsec_pole_ensure}

To specify the range of interests, we prescribe bounds for the external variable~$\bm{\alpha}$ in~(\ref{eq_BMI_equi}).
Entries of $\bm{\alpha}$ can be generated pointwisely over prescribed bounds
or recovered collectively from a given vector of eigenvalues~$\bm{\lambda}^{pre}$ described in~(\ref{eq_cost_L_M}) and~(\ref{eq_L_M_prob}).
For example, if a range $[\alpha^{min}_i,\alpha^{max}_i]$
is prescribed, then the $i$th entry of $\bm{\alpha}$ can be generated uniformly at random over the range.
Otherwise, if a range $[-\sigma^{min},0]\times [-\omega^{max},\omega^{max}] $ is given,
 we can randomly generate
\begin{equation}\label{eq_bounds}
 [\bm{\lambda}^{pre}]_i   \in  \{\sigma+j\omega :(\sigma,\omega ) \in [-\sigma^{min},0]\times [-\omega^{max},\omega^{max}]  \}
\end{equation}
where complex entries of $\bm{\lambda}^{pre}$  occur in conjugate pairs,
and then recover the entries of $\bm{\alpha}$ from
$ \bm{q}^*(\bm{\lambda}^{pre})$ defined in~(\ref{eq_L_M_prob}) using the trust region Levenberg--Marquardt algorithm presented in Section~\ref{subsec_L-M}.

After specifying the range, we  initialize the population:
assign the nominal population size $N_{nom}$ and the maximum population size $N_{max}$,
and generate initial population
\begin{equation}\label{eq_ini_pop}
 \{   \bm{\alpha}_1,\bm{\alpha}_2,...,\bm{\alpha}_{   N_{nom}  }  \}.
\end{equation}
The basic structure of  artificial immune algorithms in~\cite{TSG_15} and~\cite{AIS1} is adopted.
During the evolutionary process that mimics operations in immune systems, the population size changes over the iteration but remains below~$N_{max}$, nondominated points are maintained, and dominated points are removed from the population.

In general, large  bounds should be assigned to provide a spacious search space,
but such a spacious space can yield ineffective search when Pareto optimal solutions have most entries that are close to zero.
To manage possible ineffectiveness, we  divide the search spaces~$[\alpha^{min}_i,\alpha^{max}_i]$ or  $[-\sigma^{min},0]\times [-\omega^{max},\omega^{max}]  $
into several subspaces
\begin{equation}\label{eq_subspace}
     [\kappa_s \alpha^{min}_i,\kappa_s \alpha^{max}_i] \mbox{ or } [-\kappa_s \sigma^{min},0]\times [-\kappa_s \omega^{max},\kappa_s \omega^{max}]
\end{equation}
where $\kappa_s \in (0,1]$ with $s=1,2,...,S$.
Population initialization is thus modified accordingly. We either
generate $[\bm{\alpha}]_i\in [\kappa_s \alpha^{min}_i,\kappa_s \alpha^{max}_i]$ pointwisely or
recover entries of $\bm{\alpha}$ collectively from
\begin{equation}\label{eq_initialization}
\footnotesize{
 [\bm{\lambda}^{pre}]_i   \in  \{\sigma+j\omega :(\sigma,\omega ) \in [-\kappa_s \sigma^{min},0]\times [-\kappa_s \omega^{max},\kappa_s \omega^{max}] \}.
}
\end{equation}

\subsection{Evaluate the Objective Function}

 Function evaluation
for the objective function~$\tilde{\bm{\mathcal{F}}}$ defined in~(\ref{eq_F_tilde})
 can be divided into two parts.
The first part addresses the evaluation of $\bm{\mathcal{F}}(\bm{\alpha})$.
If $\bm{\mathcal{F}}$ is explicitly expressed as a function of~$\bm{\alpha}$, then  the evaluation is simply the substitution of  $\bm{\alpha}$ into $\bm{\mathcal{F}}$;
otherwise, deterministic algorithms are employed to evaluate $\bm{\mathcal{F}}(\bm{\alpha})$.
For example, if $\bm{\alpha}$ represents a controller gain of a linear control system and
$\bm{\mathcal{F}}(\bm{\alpha})$ denotes the associated $H_\infty$ norm,
then  $\bm{\mathcal{F}}(\bm{\alpha})$ must be evaluated using deterministic algorithms.
The second part addresses the evaluation of $\lambda^*(\bm{\alpha})$.
Because this evaluation is related to solving LMIs,
 deterministic algorithms such as interior-point methods can be used.

\subsection{Remove Dominated Points}

By removing dominated points from the population,  nondominated points are maintained.
Preserving nondominated points is an important operation that relates to the convergence of the algorithm.
Other operations such as the hyper-mutation and population update that guide the population towards the Pareto optimal set are important to the algorithm convergence as well.

\subsection{Perform Hyper-mutation Operation}

Let $\mathcal{A}(t_c)$ and $|\mathcal{A}(t_c)|$  denote the  current population and the associated population size, respectively.
For two vectors $\bm{a}$ and $\bm{b}$,
 $\bm{a} \oplus \bm{b}$ denotes a random and pointwise combination of
 entries of  $\bm{a}$ and $\bm{b}$, i.e.,
$[\bm{a} \oplus \bm{b}]_i$ can be either $[\bm{a}]_i$  or  $[\bm{b}]_i$ with equal probability.
For a hyper-mutation operation, new points~$\bm{\alpha}^j_i$ are generated by
\begin{equation}\label{eq_mutation_operation}
    \bm{\alpha}^j_i=
    \left\{
      \begin{array}{ll}
          L^j \bm{\alpha}_i  + (1-L^j) \bm{\alpha}^j, & \hbox{$rand>0.5$,} \\
  \bm{\alpha}_i  \oplus \bm{\alpha}^j    , & \hbox{otherwise}
      \end{array}
    \right.
 \end{equation}
for all  $\bm{\alpha}_i \in \mathcal{A}(t_c)$,
where $rand$ and $L^j$ are independent random numbers chosen from $(0,1)$, and
entries of $\bm{\alpha}^j$ are generated pointwisely or recovered collectively in the same way described at the population initialization.
The operation in~(\ref{eq_mutation_operation}) is performed $R(t_c)=\llcorner N_{max}/|\mathcal{A}(t_c)| \lrcorner$ times for each $i$, where $\llcorner \cdot \lrcorner$ represents the floor function. This operation can be interpreted as follows:
an $\bm{\alpha}_i$ in $\mathcal{A}(t_c)$ is cloned $R(t_c)$ times, and then all these cloned points  mutate to produce points $\bm{\alpha}^j_i, j=1,2,...,R(t_c)$.

\subsection{Update Population}

\begin{figure}
\centering
  \includegraphics[width=9cm]{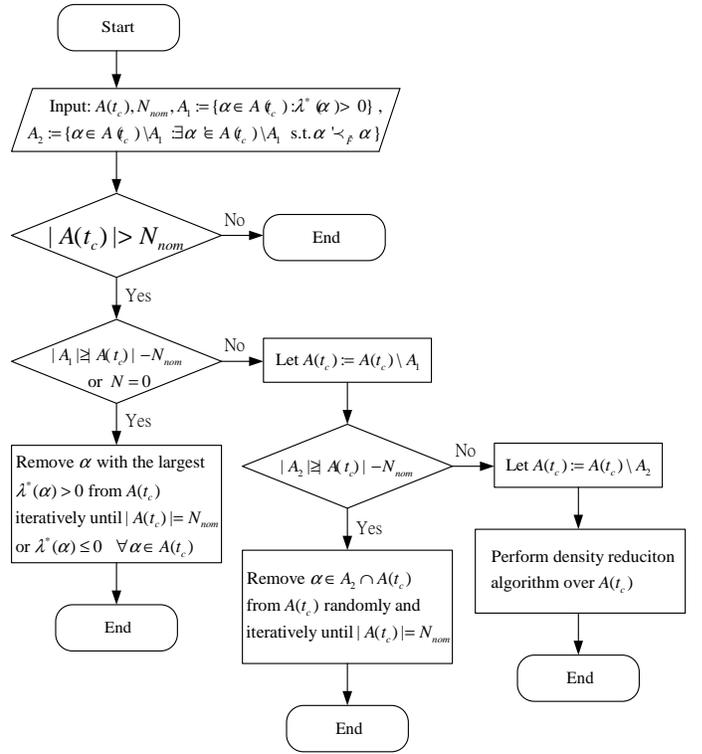}\\
  \caption{Flowchart for the removal procedure during the population update. The set $A_1$ consists of infeasible points in the current population~$\mathcal{A}(t_c)$, and $A_2$ consists of feasible but dominated points in the current population~$\mathcal{A}(t_c)$.
 The values of $\lambda^*(\bm{\alpha})$ are available
because they have been obtained during  the objective function evaluations.
  }\label{fig_update_flowchart}
\end{figure}

Updating the population consists of addition and removal operations.
After the hyper-mutation operation, $R(t_c)\times |\mathcal{A}(t_c)|$ points are newly generated and added to the population.
To keep a manageable size of the population,
we remove infeasible points, dominated points, or nondominated points in order if necessary.
Fig.~\ref{fig_update_flowchart} shows
a removal procedure that
reduces the size of $\mathcal{A}(t_c)$  to  $N_{nom}$.
In the  procedure, infeasible points $\bm{\alpha}$ that have $\lambda^*(\bm{\alpha})>0$ are gradually removed from the population.
After the removal, if the population size is still greater than its nominal size,
then dominated points are removed from the population randomly and iteratively.
If $|\mathcal{A}(t_c)|$ is still greater than $N_{nom}$, then we remove nondominated points using the density reduction algorithm described in Section~\ref{subsec_density}.

\subsection{Parameter Selection}

The values of parameters $N_{max}$, $N_{nom}$, and $t_{max}$ can affect algorithm performance.
In general, larger values of them yield a better level of performance if the complexity is not a concern~\cite{AIS1,AIS01,AIS02,AIS03}.
This is because larger values of $N_{max}$ and  $N_{nom}$ mean that more computational resources are employed to explore the search space in each iteration, and
a larger value of $t_{max}$ corresponds to more exploration time.
When the values of the parameters exceed certain thresholds,
mature convergence is attained and further improvement can be hardly observed.
Some practitioners suggest that
 large values of $N_{max}$, $N_{nom}$, and $t_{max}$ be set first, and then
these values be lowered gradually until unacceptable results are obtained.
However, this practice suffers from two drawbacks.
First, the notion of unacceptable results is vague.
Second, computational time can be a cost, and repeating the whole search with different parameters becomes costly.

Although parameters should be set differently in different problems for better performance,
we selected the same parameters in our simulations when solving all the BMI problems.
The main reason of such selection is that it is difficult to define the ``optimal'' values for parameters
 in consideration of the performance, complexity, and computational time.
Despite of using the same parameters, the simulation results still provide a proof of concept that
the proposed methodology can outperform existing design approaches
in most benchmark BMI-based design problems.

\subsection{Method of Reduction of Variables}

For a system or control design problem constrained by BMIs,
the proposed method of solution can be realized by the following steps:\\
\emph{S1)} Classify the decision variables into the internal and external variables using the Principle of Variable Classification.\\
\emph{S2)} Transform the BMI problem  in~(\ref{eq_def_BMI}) into its equivalent form in~(\ref{eq_BMI_equi}).\\
\emph{S3)} Apply the proposed HMOIA to solve~(\ref{eq_BMI_equi}).

After these three steps, the associated system or controller can  be constructed
 based on the obtained solution(s).

\begin{rmk}
There are two circumstances in which the proposed methodology can fail:
 the classification principle is not applicable; and same eigenvalues are assigned when pole placement is performed.
The first situation may occur when
 physical or mathematical bounds on coupled decision variables  cannot be obtained or readily prescribed.
Here is an example:
\begin{equation}\label{eq_not_appl}
    \bm{P}(\bm{A}+\bm{BFC})+(\bm{A}+\bm{BFC})^T\bm{P}^T<0, \bm{P}>0, \mbox{ and } \bm{F}>0
\end{equation}
where $\bm{P}$ and $\bm{F}$ are decision variables, yielding a BMI problem. In~(\ref{eq_not_appl}), we cannot assign $\bm{\alpha}=\bm{P}$ or $\bm{\alpha}=\bm{F}$ because both of them do not have inherent bounds on their entries
and  have a constraint of positive definiteness. The classification principle is thus not applicable.
Fortunately, although the proposed principle is not valid,
we rarely encounter this type of problem such as~(\ref{eq_not_appl}) in system and control designs.
When a controller design problem is considered,  $\bm{F}$  generally relates to a controller gain
and  physical constraints do not yield a requirement of positive or negative definiteness on $\bm{F}$.
In fact, $\bm{F}$ may not even be a square matrix in practice.
The second situation is related to the differentiability of eigenvalues.
Eigenvalues are differentiable only if they are distinct, and the condition of differentiability is used when pole placement is performed.
Since eigenvalues are randomly assigned in our algorithm, there is little chance that two eigenvalues are the same.
Even if their values are slightly different, the trusted region algorithm described in Section~\ref{subsec_L-M} can still work~\cite{TR_1}.
Therefore, this situation does not impose a serious restriction  on the applicability of our methodology either.
\end{rmk}

\begin{rmk}
In our framework, SOPs and MOPs are addressed in a unified manner.
For an illustrative purpose, we  examine the problem
 \begin{equation}\label{eq_modest_ex}
   \begin{array}{cc}
    \min\limits_{\bm{\alpha}}   & f(\bm{\alpha})   \\
    \mbox{ subject to } & g(\bm{\alpha}) \leq 0 \\
   \end{array}
 \end{equation}
 which can represent an SOP or MOP depending on the dimension of $f(\cdot)$.
The problem in~(\ref{eq_modest_ex}) can then be transformed into the MOP
 \begin{equation}\label{eq_modest_after}
   \begin{array}{cc}
    \min\limits_{\bm{\alpha}}   &
 \left[
   \begin{array}{cc}
    f(\bm{\alpha})  &  \max\{ g(\bm{\alpha}),0 \} \\
   \end{array}
 \right]^T.
   \end{array}
 \end{equation}
If  an SOP is considered in~(\ref{eq_modest_ex}), then the resulting problem in~(\ref{eq_modest_after}) is a 2-D MOP.
If an MOP with two objectives is considered in~(\ref{eq_modest_ex}), then the resulting problem in~(\ref{eq_modest_after}) becomes a 3-D MOP.
 For either case,~(\ref{eq_modest_after}) is regarded as an MOP and can be solved by our hybrid algorithm, producing a solution set.
 If any points in the obtained solution set  yield a nonzero value of the  final objective, i.e., $\max\{ g(\bm{\alpha}),0 \}>0$,
  then they are removed from the set because they are infeasible.
  After the removal, a legitimate APF and approximate Pareto optimal set can be attained.
The reader can refer to~\cite{SOP_MOP} for a similar technique that relates an SOP to an MOP.
\end{rmk}

\section{Numerical Examples}\label{sec_sim}

This section presents various system and control design examples using BMI approaches.
Among the solution methods included for comparison, only BB methods involve global optimization.
A detailed description of design problems and  associated system parameters can be found in the appendices.
Sections~\ref{sub_sim_feas} and~\ref{sub_sim_SO} examine
feasibility problems and SOPs constrained by BMIs, respectively.
Algorithm parameters
$N_{nom}=40$,
$N_{max}=160$,
and $t_{max}=20$ were used, and
70 simulation runs were performed.
For the BMI-based MOPs in Section~\ref{sub_sim_MO},
the iteration number $t_{max}=300$ was used to produce APFs.
These parameters were chosen based on a number of experiments in consideration of the algorithm convergence and computational time.

\subsection{Feasibility Problems with BMI Constraints}\label{sub_sim_feas}

Table~\ref{tab_feas} presents our simulation results. See Appendix~\ref{app_feas} for detailed problem descriptions.
The ``SR~$\%$'' represents the success rate of the proposed MRV solving these  feasibility problems with BMI constraints.
While the AM, ILMI, diffeomorphic state transformations, and two-step procedure were able to solve respective problems,
our method successfully found solutions in a unified manner.

\begin{table}
  \centering
  \caption{Feasibility Problems}\label{tab_feas}
  {\footnotesize
  \begin{tabular}{|c|c||c|}
    \hline
\multicolumn{1}{|c|}{ Problems}   &   Existing Solution Methods  &   \multicolumn{1}{|c|}{ Results of MRV}    \\
 Name  &       &  SR $\%$ \\
   \hline
 ST~\cite{AM} &  AM   & 100 \\
  \hline
  SIP~\cite{AM}& AM  & 100 \\
  \hline
SAFS-I~\cite{SAFS1}&  ILMI  & 100 \\
  \hline
 SAFS-II~\cite{SAFS2}& Diffeomorphic state transformations  & 100 \\
   \hline
  OCS~\cite{OCS} &  Two-step procedure  & 100   \\
    \hline
  \end{tabular}
  }
\end{table}

\begin{table*}
  \centering
  \caption{Single-objective Optimization Problems}\label{tab_SOP}
  {\footnotesize
  \begin{tabular}{|c|c|c||c|c|c|c|}
   \hline
     \multicolumn{1}{|c|}{Problems}    &     \multirow{2}{*}{Methods}  &  \multirow{2}{*}{ Results of Existing Methods } &
      \multicolumn{4}{|c|}{  Results of MRV  }\\
 Name  &   &   &  Optimum  &   Mean   &   Std    &  SR $\%$   \\
   \hline
      LPVS (maximization)~\cite{LPVS}&  BB methods & 4.75  & 4.7575  &  4.7209 &  0.0280  &  100  \\
    \hline
      SSS (maximization)~\cite{SSS}&  Path-following methods & 1.05 &  \textbf{4.1765}  &   \textbf{ 3.3347} &      0.3758   &  100  \\
         \hline
   MCD  (minimization)~\cite{ostertag2008improved}&  Path-following methods  & 0.7489 &     0.7600 &   0.8291 &    0.0407 &  100  \\
    \hline
  \end{tabular}
  }
\end{table*}

\begin{table*}
  \centering
  \caption{Spectral Abscissa Optimization}\label{tab_spectral_abs}
{\footnotesize
\begin{tabular}{ |l|r|r|r|r|r|r|r|r||r|r|r|r| }
 \hline
\multicolumn{2}{|c|}{Problems}&\multicolumn{7}{|c||}{Results of Existing Solution Methods, $\alpha_o(\bm{A}_{\bm{F}})$}&\multicolumn{4}{|c|}{Results of MRV, $\alpha_o(\bm{A}_{\bm{F}})$}\\
 Name& $\alpha_o(\bm{A})$   &HIFOO&LMIRank&PENBMI&CCDM&ICAM&   Min   & Mean   & Min & Mean &  Std & SR $\%$ \\
 \hline
 AC1&0.000 &-0.2061&-8.4766&-7.0758&-0.8535&-0.7814&  -8.4766&-3.4786    &\textbf{-18.0761}&\textbf{-11.8993}&3.2210&100\\
 \hline
 AC4&2.579   &-0.0500&-0.0500&-0.0500&-0.0500&-0.0500&  -0.05&-0.05      &-0.05&-0.05&6.9e-17&100\\
 \hline
 AC5&0.999  &-0.7746&-1.8001&-2.0438&-0.7389&-0.7389&  -2.0438&-1.2192    &\textbf{-2.4051}&\textbf{-2.1444}&0.1754&100\\
 \hline
 AC7&0.172  &-0.0322&-0.0204&0.0896&-0.0673&-0.0502&  -0.0673&-0.0161    &\textbf{-0.0747}&\textbf{-0.0494}&0.0088&100\\
 \hline
 AC8&0.012  &-0.1968&-0.4447&0.4447&-0.0755&-0.0640&  -0.4447&-0.0672    &-0.4447&\textbf{-0.4447}&2.7e-16&100\\
 \hline
 AC9&0.012  &-0.3389&-0.5230&-0.4450&-0.3256&-0.3926&  -0.523&-0.405   &\textbf{-2.0823}&\textbf{-0.5776}&0.2970&100\\
 \hline
 AC11&5.451  &-0.0003&-5.0577&x&-3.0244&-3.1573&  -5.0577&-2.8099  &\textbf{-16.9018}&\textbf{-10.6947}&2.6689&100\\
 \hline
 AC12&0.580  &-10.8645&-9.9658&-1.8757&-0.3414&-0.2948&  -10.8645&-4.6684   &\textbf{-18.3236}&\textbf{-13.3959}&2.8633&100\\
 \hline
 HE1&0.276  &-0.2457&-0.2071&-0.2468&-0.2202&-0.2134&  -0.2468&-0.2266   &-0.2446&\textbf{-0.2338}&0.0107&100\\
 \hline
 HE3&0.087  &-0.4621&-2.3009&-0.4063&-0.8702&-0.8380&  -2.3009&-0.9755   &-1.7847&-0.8908&0.3055&100\\
 \hline
 HE4&0.234  &-0.7446&-1.9221&-0.0909&-0.8647&-0.8375&  -1.9221&-0.8919    &\textbf{-3.0567}&\textbf{-1.2306}&0.5201&100\\
 \hline
 HE5&0.234  &-0.1823&x&-0.2932&-0.0587&-0.0609&  -0.2932&-0.1487   &\textbf{-1.1953}&\textbf{-0.6939}&0.2193&100\\
 \hline
 HE6&0.234  &-0.0050&-0.0050&-0.0050&-0.0050&-0.0050&  -0.005&-0.005   &-0.005&-0.005&2.6e-18&100\\
 \hline
 REA1&1.991  &-16.3918&-5.9736&-1.7984&-3.8599&-2.8932&  -16.3918&-6.1833   &\textbf{-19.3041}&\textbf{-15.4064}&2.4190&100\\
 \hline
 REA2&2.011  &-7.0152&-10.0292&-3.5928&-2.1778&-1.9514&  -10.0292&-4.9532   &\textbf{-19.4238}&\textbf{-13.0948}&4.2323&100\\
 \hline
 REA3&0.000  &-0.0207&-0.0207&-0.0207&-0.0207&-0.0207&  -0.0207&-0.0207   &-0.0207&-0.0207&3.5e-15&100\\
 \hline
 DIS2&1.675  &-6.8510&-10.1207&-8.3289&-8.4540&-8.3419&  -10.1207&-8.4193   &\textbf{-19.4340}&\textbf{-16.6852}&2.5153&100\\
 \hline
 DIS4&1.442  &-36.7203&-0.5420&-92.2842&-8.0989&-5.4467&  -92.2842&-28.6184  &-16.0222&-11.4094&2.4090&100\\
 \hline
 WEC1&0.008  &-8.9927&-8.7350&-0.9657&-0.8779&-0.8568&  -8.9927&-4.0856   &\textbf{-11.9629}&\textbf{-6.1804}&2.2291&100\\
 \hline
 IH&0.000  &-0.5000&-0.5000&-0.5000&-0.5000&-0.5000&  -0.5&-0.5  &-0.1576&-0.0617&0.0407&76.47\\
 \hline
 CSE1&0.000  &-0.4509&-0.4844&-0.4490&-0.2360&-0.2949&  -0.4844&-0.383  &-0.3489&-0.2282&0.0452&100\\
 \hline
 TF1&0.000  &x&x&-0.0618&-0.1544&-0.0704&  -0.1544&-0.0955 &\textbf{-0.2688}&\textbf{-0.1769}&0.0396&100\\
 \hline
 TF2&0.000  &x&x&-1.0e-5&-1.0e-5&  -1.0e-5 &  -1.0e-5&  -1.0e-5  &-1.0e-5&-1.0e-5&1.7e-20&100\\
 \hline
 TF3&0.000  &x&x&-0.0032&-0.0031&-0.0032&  -0.0032&-0.0031  &-0.0032&\textbf{-0.0032}&8.0e-6&100\\
 \hline
 NN1&3.606  &-3.0458&-4.4021&-4.3358&-0.8746&0.1769&  -4.4021&-2.4962  &\textbf{-5.89}&\textbf{-5.6847}&0.1812&100\\
 \hline
 NN5&0.420  &-0.0942&-0.0057&-0.0942&-0.0913&-0.0490&  -0.0942&-0.0668  &-0.094&\textbf{-0.0915}&0.0018&100\\
 \hline
 NN9&3.281  &-2.0789&-0.7048&x&-0.0279&0.0991&  -2.0789&-0.6781  &\textbf{-17.8516}&\textbf{-12.1047}&2.7270&100\\
 \hline
 NN13&1.945  &-3.2513&-4.5310&-9.0741&-3.4318&-0.2783&  -9.0741&-4.1133  &\textbf{-13.6061}&\textbf{-8.5606}&4.7341&100\\
 \hline
 NN15&0.000  &-6.9983&-11.0743&-0.0278&-0.8353&-1.0409&  -11.0743&-3.9953  &-10.9821&\textbf{-10.3002}&0.8034&100\\
 \hline
 NN17&1.170  &-0.6110&-0.5130&x&-0.6008&-0.5991&  -0.611&-0.3244  &-0.6107&\textbf{-0.6007}&0.0196&100\\
 \hline
\end{tabular}
}
\end{table*}

\begin{table*}
  \centering
  \caption{$H_2$ Optimization}\label{tab_H2}
{\footnotesize
\begin{tabular}{ |l|r|r|r|r|r|r||r|r|r|r| }
 \hline
\multicolumn{2}{|c|}{Problems}  &\multicolumn{5}{|c||}{Results of Existing Solution Methods, $||\bm{G}_{c\ell}(\bm{F})||_2$}&\multicolumn{4}{|c|}{Results of MRV, $||\bm{G}_{c\ell}(\bm{F})||_2$}\\
 Name  & $||\bm{G}_{o\ell}||_2$   &         HIFOO  &PENBMI&CCDM  & Min   & Mean  & Min & Mean &  Std & SR $\%$ \\
  \hline
 AC1& Inf&0.025&0.0061&0.054&0.0061&0.0283&		0.015&\textbf{0.0187}&0.0019&100\\
 \hline
 AC2& Inf&0.0257&0.0075&0.054&0.0075&0.029&		0.01566&\textbf{0.0188}&0.0017&100\\
 \hline
 AC3& 25.5798&2.0964&2.0823&2.1117&2.0823&2.0968&		2.1206&2.231&0.0836&100\\
 \hline
 AC4& Inf&11.0269&x&11.0269&11.0269&11.0269&	11.0269&11.0269&2.82e-15&100\\
 \hline
 AC6& 24.6067&2.8648&2.8648&2.8664&2.8648&2.8653&		3.026&3.6263&0.6533&100\\
 \hline
 AC7& Inf&0.0172&0.0162&0.0176&0.0162&0.017&	0.0162&\textbf{0.0164}&0.0001&100\\
 \hline
 AC8& Inf&0.633&0.7403&0.6395&0.633&0.6709&		\textbf{0.6321}&0.6813&0.0417&100\\
 \hline
 AC12& Inf&0.0022&0.0106&0.0992&0.0022&0.0373&		0.0627&0.1129&0.0295&100\\
 \hline
 AC15& 176.4515&1.5458&1.4811&1.5181&1.4811&1.515&		1.6564&1.828&0.1709&100\\
 \hline
 AC16& 176.4515&1.4769&1.4016&1.4427&1.4016&1.4404&		1.4641&1.5307&0.0335&100\\
 \hline
 AC17& 10.2650&1.5364&1.5347&1.5507&1.5347&1.5406&		1.5392&1.5429&0.0041&100\\
 \hline
 HE2& 13.8541&3.4362&3.4362&4.7406&3.4362&3.871&		3.7494&6.1145&1.1487&100\\
 \hline
 HE3& Inf&0.0197&0.0071&0.1596&0.0071&0.0621&		0.0333&0.1026&0.0808&100\\
 \hline
 HE4& Inf&6.6436&6.5785&7.1242&6.5785&6.7821&		15.7738&27.0193&8.7713&100\\
 \hline
 REA1& Inf&0.9442&0.9422&1.0622&0.9422&0.9828&		0.9593&0.9864&0.0176&100\\
 \hline
 REA2& Inf&1.0339&1.0229&1.1989&1.0229&1.0852&		1.0261&\textbf{1.0319}&0.0134&100\\
 \hline
 DIS1& 5.1491&0.6705&0.1174&0.7427&0.1174&0.5102&		0.51&0.7455&0.1531&100\\
 \hline
 DIS2& Inf&0.4013&0.37&0.3819&0.37&0.3844&		0.372&\textbf{0.381}&0.0128&100\\
 \hline
 DIS3& 11.6538&0.9527&0.9434&1.0322&0.9434&0.9761&		0.997&1.0623&0.0288&100\\
 \hline
 DIS4& Inf&1.0117&0.9696&1.0276&0.9696&1.0029&		1.0644&1.1091&0.0351&100\\
 \hline
 WEC1& Inf&7.394&8.1032&12.9093&7.394&9.4688&		12.1017&16.2366&1.9181&100\\
 \hline
 WEC2& 66.5622&6.7908&7.6502&12.2102&6.7908&8.8837&		13.2889&16.5298&1.3581&100\\
 \hline
 AGS& 7.0412&6.9737&6.9737&6.9838&6.9737&6.977&		7.1807&10.1753&2.4965&100\\
 \hline
 BDT1& 0.0397&0.0024&x&0.0017&0.0017&0.002&		\textbf{3.52e-05}&\textbf{5.44e-05}&1.22e-05&100\\
 \hline
 MFP& 12.6469&6.9724&6.9724&7.0354&6.9724&6.9934&		7.0556&7.6688&0.7976&100\\
 \hline
 PSM& 3.8474&0.033&0.0007&0.1753&0.0007&0.0697&		0.0217&\textbf{0.04}&0.0149&100\\
 \hline
 EB2& 4.0000&0.064&0.0084&0.1604&0.0084&0.0776&		0.0832&0.086&0.0091&100\\
 \hline
 EB3& 1.26e03&0.0732&0.0072&0.0079&0.0072&0.0294&		0.0846&0.0918&0.0141&100\\
 \hline
 TF1& Inf&0.0945&x&0.1599&0.0945&0.1272&		0.1949&0.6965&1.2484&100\\
 \hline
 TF2& Inf&11.1803&x&11.1803&11.1803&11.1803&		11.1803&11.1803&1.48e-14&100\\
 \hline
 TF3& Inf&0.1943&0.1424&0.2565&0.1424&0.1977&		0.2568&2.0745&1.6401&97.67\\
 \hline
 NN2& Inf&1.1892&1.1892&1.1892&1.1892&1.1892&	1.1892&1.1892&3.82e-06&100\\
 \hline
 NN4& 5.5634&1.8341&1.8335&1.859&1.8335&1.8422&		1.8945&1.989&0.056&100\\
 \hline
 NN8& 5.9220&1.5152&1.5117&1.5725&1.5117&1.5331&		1.5241&1.5518&0.017&100\\
 \hline
 NN11& 0.1420&0.1178&0.079&0.1263&0.079&0.1077&		0.0972&0.1137&0.0102&100\\
 \hline
 NN13& Inf&26.1012&26.1314&62.3995&26.1012&38.2107&		30.1629&\textbf{34.4666}&3.9562&100\\
 \hline
 NN14& Inf&26.1448&26.1314&62.3995&26.1314&38.2252&		29.6438&\textbf{35.6657}&7.4852&100\\
 \hline
 NN15& Inf&0.0245&x&0.021&0.021&0.0227&		\textbf{0.0034}&\textbf{0.0035}&8.82e-05&100\\
 \hline
 NN16& Inf&0.1195&0.1195&0.1195&0.1195&0.1195&		0.1208&0.2085&0.068&100\\
 \hline
 NN17& Inf&3.253&3.2404&3.3329&3.2404&3.2754&		3.2554&3.2881&0.1843&100\\
 \hline
\end{tabular}
}
\end{table*}

\begin{table*}
  \centering
  \caption{$H_\infty$ Optimization}\label{tab_infty}
{\footnotesize
\begin{tabular}{ |l|r|r|r|r|r|r||r|r|r|r| }
 \hline
 \multicolumn{2}{|c|}{Problems}  &\multicolumn{5}{|c||}{Results of Existing Solution Methods, $||\bm{G}_{c\ell}(\bm{F})||_\infty$}&\multicolumn{4}{|c|}{Results of MRV, $||\bm{G}_{c\ell}(\bm{F})||_\infty$}\\
 Name & $||\bm{G}_{o\ell}||_\infty$   &   HIFOO  &PENBMI&CCDM  & Min   & Mean  & Min & Mean &  Std & SR $\%$ \\
 \hline
   AC1& 2.1672&0.0000&x&0.0177&0.0000&0.0088&		0.0405&0.0907&0.0285&100\\
 \hline
  AC2& 2.1672&0.1115&x&0.1140&0.1115&0.1127&		0.1262&0.1917&0.0310&100\\
 \hline
  AC3& 352.6869&4.7021&x&3.4859&3.4859&4.094&		3.9206&4.5709&0.4217&100\\
 \hline
  AC4 & 69.9900&0.9355&x&69.9900&0.9355&35.4627&		69.99&69.99&1.29e-13&100\\
 \hline
  AC6& 391.7820&4.1140&x&4.1954&4.114&4.1547&		4.8138&6.9232&3.1554&100\\
 \hline
  AC7& 0.0424&0.0651&0.3810&0.0548&0.0548&0.1669&		\textbf{0.0315}&\textbf{0.0316}&6.27e-05&100\\
 \hline
  AC8& 1.7e03&2.0050&x&3.0520&2.005&2.5285&		\textbf{1.4305}&\textbf{1.8223}&0.4017&100\\
 \hline
  AC9& Inf&1.0048&x&0.9237&0.9237&0.9642&		3.2926&5.1355&1.069&100\\
 \hline
  AC11& Inf&3.5603&x&3.0104&3.0104&3.28535&		3.1158&4.0119&0.5472&100\\
 \hline
  AC12& 586.9176&0.3160&x&2.3025&0.316&1.3092&		1.3532&1.9379&0.1729&100\\
 \hline
  AC15& 2.4e03&15.2074&427.4106&15.1995&15.1995&152.6058&		17.1925&\textbf{18.2818}&0.4480&100\\
 \hline
  AC16& 2.4e03&15.4969&x&14.9881&14.9881&15.2425&		15.8600&16.6389&0.5547&100\\
 \hline
  AC17& 30.8328&6.6124&x&6.6373&6.6124&6.6248&		6.6124&\textbf{6.6124}&1.30e-06&100\\
 \hline
  HE1& 0.5598&0.1540&1.5258&0.1807&0.154&0.6201&		\textbf{0.1538}&\textbf{0.1595}&0.0045&100\\
 \hline
  HE2& 81.8318&4.4931&x&6.7846&4.4931&5.6388&		\textbf{4.3681}&\textbf{5.5034}&0.7626&100\\
 \hline
  HE3& 1.4618&0.8545&1.6843&0.9243&0.8545&1.1543&		0.8570&\textbf{0.9142}&0.0381&100\\
 \hline
  HE4& 174.2975&23.3448&x&22.8713&22.8713&23.108&		46.5677&65.3844&7.8214&100\\
 \hline
  HE5& 2.0802&8.8952&x&37.3906&8.8952&23.1429&		20.8784&137.7817&155.8509&100\\
 \hline
  REA1& 25.7708&0.8975&x&0.8815&0.8815&0.8895&		0.8836&0.9073&0.0309&100\\
 \hline
  REA2& 26.3449&1.1881&x&1.4188&1.1881&1.3034&		\textbf{1.1471}&\textbf{1.168}&0.0125&100\\
 \hline
  REA3& Inf&74.2513&74.446&74.5478&74.2513&74.415& 74.2513&75.5692&2.3953&100\\
 \hline
  DIS1& 17.3209&4.1716&x&4.1943&4.1716&4.1829&		4.3197&4.7678&0.4376&100\\
 \hline
  DIS2& 0.9016 &1.0548&1.7423&1.1546&1.0548&1.3172&		1.0604&\textbf{1.1364}&0.039&100\\
 \hline
  DIS3& 32.0698&1.0816&x&1.1382&1.0816&1.1099&		1.2727&1.3733&0.0436&100\\
 \hline
  DIS4& 3.1304&0.7465&x&0.7498&0.7465&0.7481&		0.9486&1.0203&0.0411&100\\
 \hline
  TG1& 130.3418&12.8462&x&12.9336&12.8462&12.8899&		14.2157&25.1589&13.8431&100\\
 \hline
  AGS& 8.1820&8.1732&188.0315&8.1732&8.1732&68.126&		10.0239&\textbf{20.963}&6.2165&100\\
 \hline
  WEC2& 354.3162&4.2726&32.9935&6.6082&4.2726&14.6247&		7.8382&\textbf{10.3568}&1.5282&100\\
 \hline
  WEC3&  180.0408&4.4497&200.1467&6.8402&4.4497&70.4788&		7.2021&\textbf{9.6854}&1.4185&100\\
 \hline
  BDT1& 5.1426&0.2664&x&0.8562&0.2664&0.5613&		\textbf{0.2662}&\textbf{0.2669}&0.0008&100\\
 \hline
  MFP& 83.1407 &31.5899&x&31.6079&31.5899&31.5989&		33.9193&51.8236&25.4745&100\\
 \hline
  IH& Inf&1.9797&x&1.1858&1.1858&1.5827&		30.1004&450.2228&1231.8531&90.38\\
 \hline
  CSE1& 1.3e13 &0.0201&x&0.0220&0.0201&0.021&		\textbf{0.0198}&\textbf{0.0199}&2.00e-05&100\\
 \hline
  PSM& 4.2328&0.9202&x&0.9227&0.9202&0.9214&	0.9202&\textbf{0.9208}&0.001&100\\
 \hline
  EB1& 39.9526&3.1225&39.9526&2.0276&2.0276&15.0342&		\textbf{1.888}&\textbf{1.888}&4.81e-08&100\\
 \hline
  EB2& 39.9526 &2.0201&39.9547&0.8148&0.8148&14.2632&		\textbf{0.8142}&\textbf{0.8142}&8.44e-16&100\\
 \hline
  EB3& 3.9e06&2.0575&3995311.074&0.8153&0.8153&1331771.316&		\textbf{0.8143}&\textbf{0.8143}&6.17e-16&100\\
 \hline
  NN1& Inf &13.9782&14.6882&18.4813&13.9782&15.7159&		15.5294&16.6317&0.8991&100\\
 \hline
  NN2& Inf&2.2216&x&2.2216&2.2216&2.2216&		\textbf{2.2038}&\textbf{2.2056}&0.0021&100\\
 \hline
  NN4& 31.0435&1.3627&x&1.3802&1.3627&1.3714&		1.4327&1.6037&0.0855&100\\
 \hline
  NN8& 46.5086&2.8871&78281181.15&2.9345&2.8871&26093728.99&		2.9193&\textbf{2.9977}&0.047&100\\
 \hline
  NN9& 3.7675&28.9083&x&32.1222&28.9083&30.5152&		30.7173&35.299&7.047&100\\
 \hline
  NN11& 0.1703&0.1037&x&0.1566&0.1037&0.1301&		0.1075&0.1374&0.0127&100\\
 \hline
  NN15& Inf&0.1039&x&0.1194&0.1039&0.1116&		\textbf{0.098}&\textbf{0.0982}&0.0001&100\\
 \hline
  NN16& 6.4e14&0.9557&x&0.9656&0.9557&0.9606&		2.3044&6.8293&3.2719&100\\
 \hline
  NN17& 2.8284&11.2182&x&11.2381&11.2182&11.2281&		\textbf{11.2042}&11.6262&0.5366&100\\
 \hline
\end{tabular}
}
\end{table*}

\subsection{SOPs  with BMI Constraints}\label{sub_sim_SO}

We compared the MRV with BB and path-following methods.
Table~\ref{tab_SOP} shows the numerical results in which
the ``Optimum, Mean,'' and ``Std'' stand for the achieved optimal value, mean value, and standard deviation, respectively.
In comparison with existing methods, the MRV yielded prominent improvement in problem SSS, and had similar results in problems LPVS and MCD.

To further assess the performance of the MRV, we used various models in COMPl$_e$ib~\cite{COMPleib,COMPleib_soft}, including
aircraft models (AC), helicopter models (HE), reactor models (REA), decentralized interconnected systems (DIS),
wind energy conversion models (WEC),  terrain following models (TF), and academic test problems (NN).
 Spectral abscissa optimization problems, $H_2$ optimization problems, and
$H_\infty$ optimization problems were investigated.
The associated system under investigation has the following form:
\begin{equation}\label{eq_syst_inv}
\left\{
  \begin{array}{l}
 \bm{\dot{x}}=\bm{Ax}+\bm{B}_1\bm{w}+\bm{Bu} \\
 \bm{z}=\bm{C}_1\bm{x}+\bm{D}_{11}\bm{w}+\bm{D}_{12}\bm{u} \\
  \bm{y}=\bm{Cx }.
  \end{array}
\right.
\end{equation}
The closed-loop system of~(\ref{eq_syst_inv}) using  a static output feedback controller  $\bm{u}=\bm{Fy}=\bm{FCx }$ can be written as
\begin{equation}\label{eq_syst_cl}
\left\{
  \begin{array}{l}
 \bm{\dot{x}}=(\bm{A}+\bm{BFC})\bm{x} +\bm{B}_1\bm{w}= \bm{A}_{\bm{F}}\bm{x} +\bm{B}_1\bm{w}\\
 \bm{z}=(\bm{C}_1+\bm{D}_{12}\bm{FC})\bm{x}+\bm{D}_{11}\bm{w}=\bm{C}_{\bm{F}}\bm{x}+\bm{D}_{11}\bm{w}. \\
  \end{array}
\right.
\end{equation}

The spectral abscissa optimization (or minimization) associated with~(\ref{eq_syst_cl}) is formulated as~\cite{burke2002two}
\begin{equation}\label{eq_ex_spectrum}
   \min_{\bm{F}} \;  \alpha_o(\bm{A}_{\bm{F}})
\end{equation}
 where
\begin{equation*}
\alpha_o(\bm{A}_{\bm{F}})=\max_{\lambda\in eig\{\bm{A}_{\bm{F}} \}}   Re(\lambda)
\end{equation*}
is the spectral abscissa of $\bm{A}_{\bm{F}}$,
 $eig\{\bm{A}_{\bm{F}} \} $ represents the set of  eigenvalues of~$\bm{A}_{\bm{F}}$, $Re(\lambda)$ is the real part of $\lambda$, and
 matrix $\bm{F}$ represents the controller gain that must be determined.
Because the objective function in~(\ref{eq_ex_spectrum}) is neither smooth nor Lipschitz continuous,
(\ref{eq_ex_spectrum}) is conventionally transformed into the BMI problem~\cite{burke2002two,COMPleib}:
\begin{equation}\label{eq_BMI_spectrum}
 \begin{split}
    \min_{\bm{P},\bm{F},\beta  } \; & \beta  \\
  \mbox{subject to }&   (\bm{PA}_{\bm{F}},\star)+2\beta \bm{P}<0, \bm{P}>0
 \end{split}
\end{equation}
where $\beta$ is related to the decay rate of the system.

To use our methodology,  we compared~(\ref{eq_ex_spectrum}) to~(\ref{eq_def_BMI}),
and  let $\bm{\alpha}=\bm{F}$, $\bm{X}=\emptyset$ (no internal variable is involved), and
$\bm{\mathcal{F}}(\bm{\alpha} )=\alpha_o(\bm{A}_{\bm{F}})  $.
 Table~\ref{tab_spectral_abs} presents the resulting performance of various solution methods.\footnote{In Tables~\ref{tab_spectral_abs}--\ref{tab_infty}, the numerical results of existing methods
 HIFOO, LMIRank, PENBMI, and CCDM  come from~\cite{dinh2012combining}, and the results of ICAM are from~\cite{dinh2012inner}. The notation $\alpha_o(\bm{A})$ represents  the spectral abscissa of $\bm{A}$.}
Minimization problems are considered. Values in the columns of  Table~\ref{tab_spectral_abs} labeled with ``Min''  and ``Mean'' present the best possible performance and average performance during the simulation trials, respectively. The letter ``x'' means that no solution is found.
Our approach performed excellently in approximately $73\%$  of test problems (marked in bold numbers) and yielded similar levels of performance  in the remaining problems
as compared with existing solution methods.

\begin{rmk}
In Tables~\ref{tab_spectral_abs}--\ref{tab_add_SAO},
the ``Min''  and ``Mean''  serve as performance metrics in different situations.
Having the minimum ``Min'' in the results of the MRV implies that the proposed method outperforms existing solution methods in the best-case scenario.
The best-case scenario can be related to the situation in which the computational complexity is not a concern.
 The best solution can then be obtained by a series of evaluations.
This situation occurs in certain off-line applications and the value of ``Min'' can serve as a performance metric.
By contrast, having the minimum ``Mean'' in the results of the MRV indicates that the proposed method is better than existing solution methods in average.
When  computational resources are limited, e.g., in certain online applications, the value of ``Mean'' can serve as a performance metric.
\end{rmk}

For $H_2$ and $H_\infty$ optimization, we use
\begin{equation}\label{eq_cloop}
G_{o\ell}=
\left[
  \begin{array}{c|c}
    \bm{A} & \bm{B}_{1} \\
    \hline
   \bm{C}_{1} &\bm{D}_{11} \\
  \end{array}
\right]
\mbox{ and }
G_{c\ell}(\bm{F})=
\left[
  \begin{array}{c|c}
    \bm{A}_{\bm{F}} & \bm{B}_{1} \\
    \hline
   \bm{C}_{\bm{F}} &\bm{D}_{11} \\
  \end{array}
\right]
\end{equation}
 to represent the open- and closed-loop systems, respectively.
The controller gain~$\bm{F}$ was designed so that the $H_2$ norm of the closed-loop system, denoted by $||G_{c\ell}(\bm{F})||_2$, or
the $H_\infty$ norm of the closed-loop system, denoted by $||G_{c\ell}(\bm{F})||_\infty$,
was minimized while certain BMI constraints were satisfied (see Appendix~\ref{app_SO}).
Tables~\ref{tab_H2} and~\ref{tab_infty} present the respective results.\footnote{In the tables, $||\bm{G}_{o\ell}||_2$ and $||\bm{G}_{o\ell}||_\infty$ represent the $H_2$ and $H_\infty$ norms of the open-loop system~$\bm{G}_{o\ell}$, respectively. The notation ``Inf''  stands for ``infinity.''}
The MRV outperformed  existing solution methods
in approximately $27.5\%$ and $47.8\%$ of test problems for $H_2$ and $H_\infty$ optimization, respectively (marked in bold numbers).
For the remaining problems,
it yielded  similar levels of performance.

\begin{table}
  \centering
  \caption{Additional Problems in Spectral Abscissa, $H_2$, and $H_\infty$ Optimization}\label{tab_add_SAO}
 {\footnotesize
\begin{tabular}{ |l|c||r|r|r|r| }
 \hline
\multicolumn{2}{|c||}{Problems}&\multicolumn{4}{|c|}{Results of MRV, $\alpha_o(\bm{A}_{\bm{F}})$}\\
 Name& $\alpha_o(\bm{A})$& Min & Mean &  Std & SR $\%$ \\
 \hline
AC18  & 0.1015 &   -1.9248&  -1.1526   & 0.3836&100\\
 \hline
 DIS5 & 1.0192 &  -2.7044&   -2.3709      & 0.2454&100\\
  \hline
PAS & 0 &  -2.05e-05&      -1.61e-05         &5.87e-06&100\\
 \hline
NN12 &  1.0000  &   -2.4761&   -1.9860 &0.5362&100\\
 \hline
  \hline
\multicolumn{2}{|c||}{Problems}&\multicolumn{4}{|c|}{Results of MRV, $||\bm{G}_{c\ell}(\bm{F})||_2$}\\
 Name& $||\bm{G}_{o\ell}||_2$ & Min & Mean &  Std & SR $\%$ \\
  \hline
 AC18  & Inf&    20.0248 &    21.1601 &   0.6287 & 100 \\
  \hline
 DIS5   & Inf&  0.0013 &    0.0019&  0.0006 & 100\\
  \hline
 NN12 & Inf&      8.6989   &  10.5373 & 3.7473  & 100\\
  \hline
    \hline
\multicolumn{2}{|c||}{Problems}&\multicolumn{4}{|c|}{Results of MRV, $||\bm{G}_{c\ell}(\bm{F})||_\infty$}\\
 Name& $||\bm{G}_{o\ell}||_\infty $& Min & Mean &  Std & SR $\%$ \\
  \hline
 AC18  & 140.3365&    10.8088& 11.9210&  1.7586 & 100\\
  \hline
 DIS5   & 0.0108&  28.7928 &    29.3512&  0.4921& 100\\
  \hline
 NN12 & Inf&  22.4556 &   40.5618 &     28.6066& 100\\
  \hline
\end{tabular}
}
\end{table}

As testified in~\cite{dinh2012combining} and~\cite{dinh2012inner},
 the CCDM and ICAM were relatively robust compared with other existing solution methods, but
they failed or made little progress towards a local solution in problems AC18, DIS5, PAS, and NN12 in COMPl$_e$ib.
By contrast, the MRV was able to find solutions in these problems, as shown in Table~\ref{tab_add_SAO}.

\subsection{MOPs  with BMI Constraints}\label{sub_sim_MO}

This subsection examines the ability of the MRV to produce APFs for controller designs involving multiple objectives.
The first problem is the sparse linear constant output-feedback design described in~(\ref{eq_sparse_mod})
in which the matrices $\bm{A},\bm{B},$ and $\bm{C}$ are defined in~\cite{SSS} and~\cite{dinh2012combining}.
The goal is to maximize the decay rate $\beta$ and minimize the entry values of the controller gain~$\bm{F}$.

The second problem is an MO version of a mixed $H_2/H_\infty$ control problem (derived from (35) in~\cite{dinh2012combining}):
\begin{equation}\label{eq_ex_MO2_pre}
 \begin{split}
   \min_{\bm{P}_1,\bm{P}_2,\bm{F},\bm{Z},\gamma  } \; &
\left[
  \begin{array}{cc}
    trace(\bm{Z}) &   \gamma \\
  \end{array}
\right]^T  \\
   \mbox{subject to } &
\left[
  \begin{array}{cc}
 (\bm{P}_1 \bm{A}_{ \bm{F} },\star)+(\bm{C}_{ \bm{F} }^{\bm{z}_1})^T\bm{C}_{ \bm{F} }^{\bm{z}_1}  & \bm{P}_1 \bm{B}_1 \\
    \star & -\gamma^2 \bm{I} \\
  \end{array}
\right]<0, \\
&
\left[
  \begin{array}{cc}
 (\bm{P}_2 \bm{A}_{ \bm{F} },\star)  & \bm{P}_2 \bm{B}_1 \\
    \star & - \bm{I} \\
  \end{array}
\right]<0,
\left[
  \begin{array}{cc}
 \bm{P}_2   & \star \\
   \bm{C}_{ \bm{F} }^{\bm{z}_2} &  \bm{Z} \\
  \end{array}
\right]>0,\\
&
\bm{P}_1
, \bm{P}_2>0
 \end{split}
\end{equation}
where $\bm{A}_{ \bm{F} }=\bm{A}+\bm{BFC}$, $\bm{C}_{ \bm{F} }^{\bm{z}_i}=\bm{C}^{\bm{z}_i}+\bm{FC}, i=1,2$,
and the matrices $\bm{A},\bm{B},\bm{B}_1,\bm{C}_{ \bm{F} }^{\bm{z}_1}$ and $\bm{C}_{ \bm{F} }^{\bm{z}_2}$
are defined in~\cite{SSS} and~\cite{dinh2012combining}.
The $H_2$ and $H_\infty$ performance are related to $trace(\bm{Z})$ and $\gamma$, respectively.
To apply the classification principle, we introduced a slack variable $\eta$ and imposed an additional constraint $ trace(\bm{Z}) \leq \eta^2$.
The mixed $H_2/H_\infty$ design problem in~(\ref{eq_ex_MO2_pre}) can then be transformed into
\begin{equation}\label{eq_ex_MO2}
{\small
 \begin{split}
   \min_{\bm{P}_1,\bm{P}_2,\bm{F},\bm{Z},\eta,\gamma  } \; &
\left[
  \begin{array}{cc}
    \eta &   \gamma \\
  \end{array}
\right]^T  \\
   \mbox{subject to } &
\left[
  \begin{array}{cc}
 (\bm{P}_1 \bm{A}_{ \bm{F} },\star)+(\bm{C}_{ \bm{F} }^{\bm{z}_1})^T\bm{C}_{ \bm{F} }^{\bm{z}_1}  & \bm{P}_1 \bm{B}_1 \\
    \star & -\gamma^2 \bm{I} \\
  \end{array}
\right]<0, \\
&
\left[
  \begin{array}{cc}
 (\bm{P}_2 \bm{A}_{ \bm{F} },\star)  & \bm{P}_2 \bm{B}_1 \\
    \star & - \bm{I} \\
  \end{array}
\right]<0,
\left[
  \begin{array}{cc}
 \bm{P}_2   & \star \\
   \bm{C}_{ \bm{F} }^{\bm{z}_2} &  \bm{Z} \\
  \end{array}
\right]>0,\\
&
\bm{P}_1
, \bm{P}_2>0, trace(\bm{Z}) \leq \eta^2.
 \end{split}
 }
\end{equation}
According to the classification principle, $\bm{P}_1,\bm{P}_2$, and $\bm{Z}$ must be included in the internal variable $\bm{X}$ because they have a constraint on positive definiteness ($\bm{Z}$ is located in a diagonal block of a positive-definite matrix and hence, it is positive-definite); since $\eta$ and $\gamma$ appear in the objective function, they are included in the external variable $\bm{\alpha}$; and finally, $\bm{F}$ must be included in  $\bm{\alpha}$ as well so that
for a fixed $\bm{\alpha}$, $\mathcal{BMI}( \bm{\alpha},\bm{X} ) < 0$   becomes an LMI in the variable $\bm{X}$.
Referring to the notations in~(\ref{eq_def_BMI}), we let~$\bm{\alpha}=(\eta,\gamma,\bm{F})$, $\bm{X}=(\bm{P}_1,\bm{P}_2,\bm{Z})$, and
$\bm{\mathcal{F}}(\bm{\alpha} )=[   \eta  \; \gamma]^T$.

\begin{figure*}
  \centering
\begin{equation*}
   \begin{array}{cc}
     \includegraphics[width=8cm]{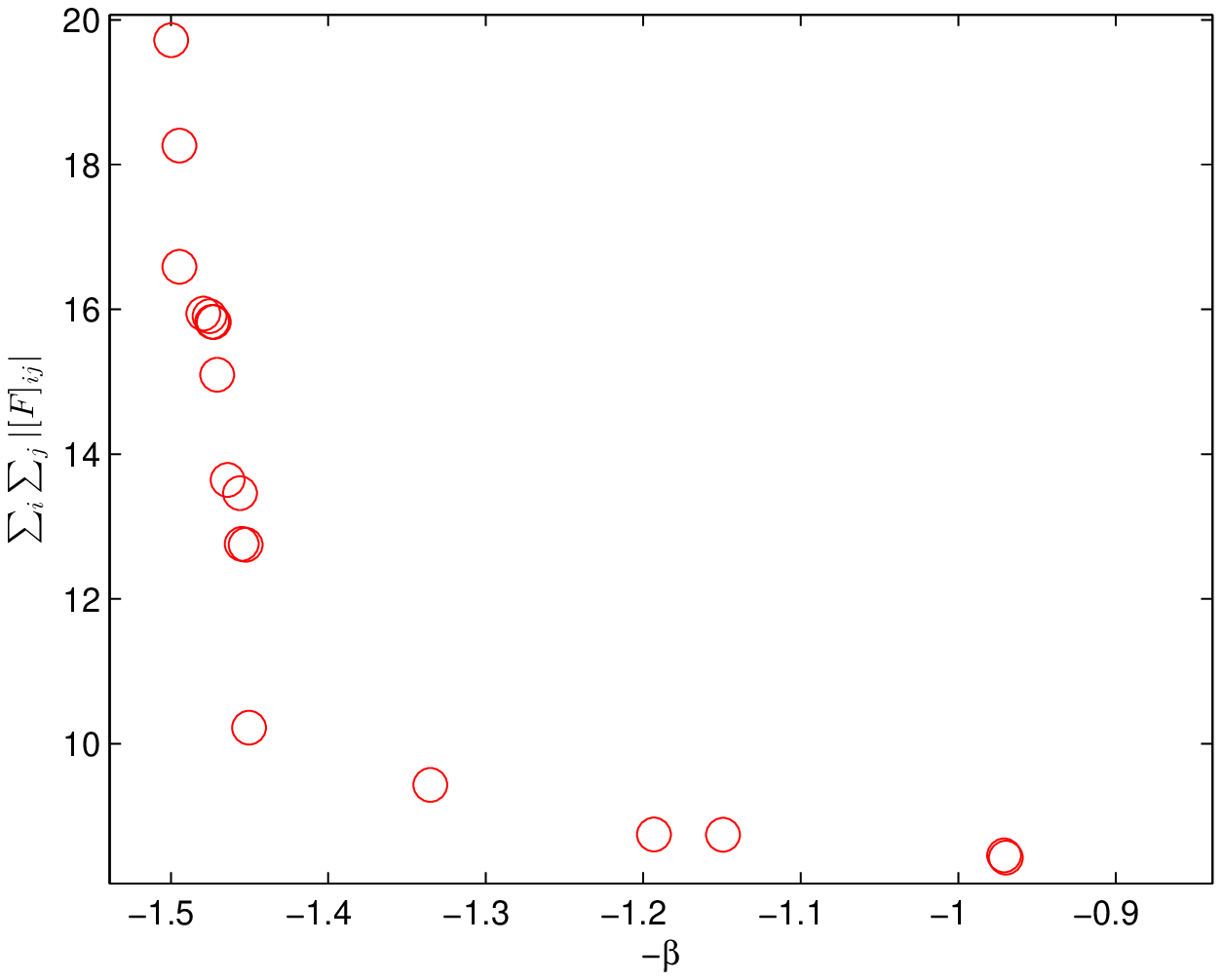} &  \includegraphics[width=8cm]{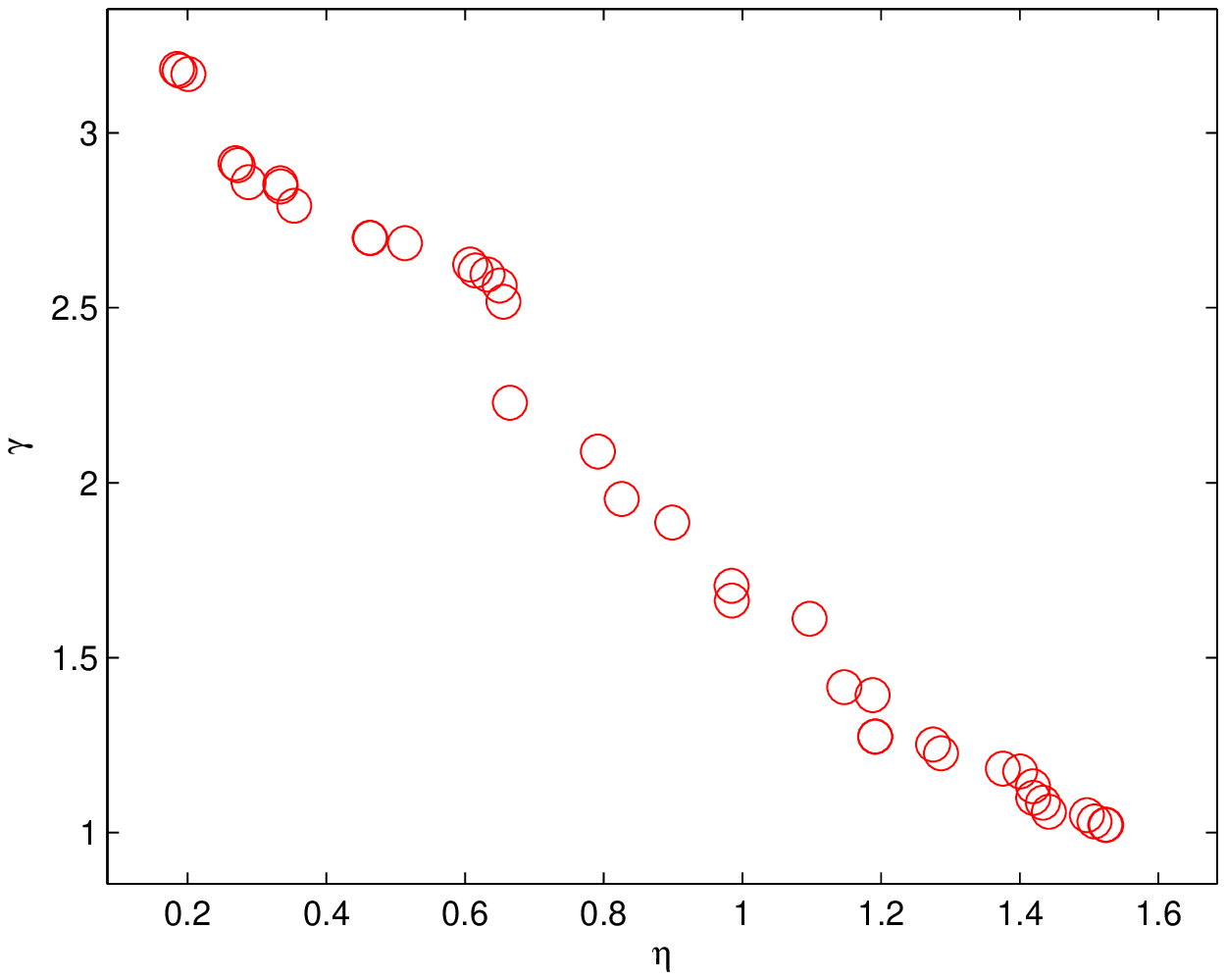} \\
     \mbox{(a)} & \mbox{(b)}
     \end{array}
\end{equation*}
  \caption{APFs obtained by solving MOPs with BMI constraints using the proposed MRV. (a) APF obtained by solving~(\ref{eq_sparse_mod}); (b) APF obtained by solving~(\ref{eq_ex_MO2}).  }\label{fig_APFs}
\end{figure*}

Fig.~\ref{fig_APFs} shows APFs obtained by solving~(\ref{eq_sparse_mod}) and~(\ref{eq_ex_MO2}).
The APF in Fig.~\ref{fig_APFs}(a) is bent, implying that objectives are not heavily dependent.
By choosing a design that corresponds to a vector in the knee region of the APF, it is possible to simultaneously improve both objectives, i.e., maximizing the decay rate and minimizing the values of entries of the controller gain.
By contrast,  the line shape of the APF in Fig.~\ref{fig_APFs}(b) indicates that  simultaneous improvement in both objectives cannot be attained.
Therefore, for the design problem in~(\ref{eq_ex_MO2}), we must sacrifice the $H_\infty$ performance to improve the $H_2$ performance, or vice versa.
These two examples illustrate  the ability of the MRV  to produce APFs for  MOPs constrained by BMIs.
Since an APF can contain useful information about the relationships among objectives,
applying the proposed methodology to solve BMI-constrained MOPs can be advantageous.

\begin{rmk}
We illustrated how beneficial it can be by using
the proposed methodology, consisting of variable classification, problem transformation, and algorithm integration,
 to solve BMI-constrained problems in system and control designs.
 The comparisons with existing BMI solution methods validated the effectiveness of the proposed methodology.
A proof of concept was thus provided.
 It is worth noting that existing MOEAs cannot be directly applied to obtain  solutions to our BMI-constrained problems such as~(\ref{eq_sparse_mod}) and~(\ref{eq_ex_MO2})
 because they are not expressed in a standard form of an MOP. (The problem in~(\ref{eq_BMI_equi}) derived from the Equivalence Theorem is in the standard form.)
Even if our variable classification and problem transformation have been performed so that the standard form has been obtained,
there is little chance that existing MOEAs can solve the resulting MOPs
because the external variable $\bm{\alpha}$ often contains the controller gain $\bm{F}$ that is still related to some matrix constraint.
The proposed hybrid algorithm can have the power to produce solutions mainly because
we have incorporated a pole-placement technique into the search engine (described in Section~\ref{subsec_L-M}) and
 designed a mechanism that ensures legitimate pole placement (described in Section~\ref{subsec_pole_ensure}).
\end{rmk}

\section{Conclusion}\label{sec_con}

In this paper, we proposed a solution method termed the MRV for BMI problems in system and control designs.
By using this method,  the associated decision variables are classified into external and internal variables
according to the variable classification principle; the BMI problem is then transformed into an unconstrained optimization problem that has fewer decision variables;
and finally, a hybrid algorithm termed HMOIA is applied to solve the unconstrained problem, yielding a feasible point, a solution, or a set of approximate Pareto optimal solutions
depending on the dimension of the objective function.
In our simulations, we compared the proposed MRV to various BMI solution methods and found that
the MRV yielded excellent levels of performance in many benchmark problems, validating the proposed methodology.
In contrast with some existing BMI solution methods, the MRV possesses the following advantages:
it expresses decision variables in a vector form, which is convenient for controller designs;
it avoids much effort such as problem reformulation or prior derivations, which can be heuristic and cumbersome;
it performs global optimization instead of local search, which is essential because BMI problems are non-convex and have multiple local optima;
and it can address multiple objectives simultaneously.

\appendices

\section{Feasibility Problems with BMI Constraints}\label{app_feas}

If not specified, the following equation numbers are those in the respective references.

ST (Stability Test, Sec. V-A of~\cite{AM} with $\mu=0.1$):
The problem was described in~(\ref{eq_ST_BMI}) of this paper,
and the bounds
$\tau_{\ell i j}\in [0,10]$ were used.

SIP  (Stabilization of Inverted Pendulum, Sec.~V-B of~\cite{AM} with $\mu=0.001$ and BMIs in~(13)):
\begin{equation*}
\begin{split}
   &
 \left[
      \begin{array}{ccc}
       ( \bm{P}_i (\bm{A}_\ell+\bm{B}_\ell \bm{F}_i  ),\star)+\mu^2 \bm{I} & \star & \star \\
        \bm{P}_i & -\bm{I} & 0 \\
       \mu \bm{F}_i & \star & -\bm{I} \\
      \end{array}
    \right]
   \\
   &
 -  \sum_{j=1}^2 \tau_{\ell i j}
    \left[
      \begin{array}{cc}
    \bm{P}_j-   \bm{P}_i & \star \\
        \bm{0} & \bm{0} \\
      \end{array}
    \right]<0 \mbox{ for } \ell,i=1,2
\end{split}
\end{equation*}
where $\tau_{\ell i j}\geq 0$, $\bm{F}_i$, and $\bm{P}_i>0$ are the decision variables.
The external and internal variables were classified as $\bm{\alpha}=(\tau_{112},\tau_{121},\tau_{212},\tau_{221},\bm{F}_1,\bm{F}_2)$ and $\bm{X}=(\bm{P}_1,\bm{P}_2)$, respectively.
The bounds $\tau_{\ell i j}\in [0,10]$ and   $[\bm{F}_i]_{mn}\in [-10,10]$ were used.

SAFS-I  (Stabilization of an Affine Fuzzy System, Sec.~V of~\cite{SAFS1} with BMIs in (16.1) and~(16.2)):
\begin{equation*}
    \bm{G}_{22}^T \bm{P} \bm{G}_{22} -\bm{P}<0
\end{equation*}
and
\begin{equation*}
\left[
  \begin{array}{cc}
\bm{G}_{ij}^T \bm{P} \bm{G}_{ij} -\bm{P}-  \tau_{ij}\bm{T}_{ij}   & \bm{G}_{ij}^T \bm{P} \bm{\sigma}_{ij} -  \tau_{ij}\bm{u}_{ij}   \\
    \star & \bm{\sigma}_{ij}^T \bm{P} \bm{\sigma}_{ij} -  \tau_{ij}v_{ij} \\
  \end{array}
\right]<0
\end{equation*}
 for   $(i,j)=(1,1),(3,3), (1,2), (2,3)$,
 where
\begin{equation*}
\bm{G}_{ij}{ }={ }  \frac{1}{2}\{ ( \bm{A}_{i}- \bm{B}_{i} \bm{F}_{j})+ ( \bm{A}_{j}- \bm{B}_{j} \bm{F}_{i})  \} \mbox{ and }  \bm{\sigma}_{ij}=\frac{1}{2} ( \bm{\mu}_i+ \bm{\mu}_j ).
\end{equation*}
The $\tau_{ i j}\geq 0$, $\bm{F}_i$, and $\bm{P}>0$ are the decision variables.
The external and internal variables were classified as $\bm{\alpha}=(\tau_{11},\tau_{33},\tau_{12},\tau_{23},\bm{F}_1,\bm{F}_2,\bm{F}_3)$ and $\bm{X}=\bm{P}$, respectively.
The bounds
   $ \tau_{ i j}\in [0,5]$ and  $[\bm{F}_i]_{mn}\in [-5,5]$
were used.

SAFS-II (Stabilization of an Affine Fuzzy System, Sec.~V of ~\cite{SAFS2} with BMIs in~(14a) and~(14b)):
\begin{equation*}
 (\bm{P} (\bm{A}_2- \bm{BF}_2),\star)<0
\end{equation*}
and
\begin{equation*}
\small{
\left[
  \begin{array}{cc}
 (\bm{P} (\bm{A}_i- \bm{BF}_i),\star)-  \tau_{ij}\bm{T}_{ij}   &  \bm{P} (\bm{\mu}_i -\bm{B}\sigma_i ) -     \tau_{ij}\bm{u}_{ij}   \\
    \star &  -  \tau_{ij}v_{ij} \\
  \end{array}
\right]<0
}
\end{equation*}
for   $(i,j)=(1,1),(3,1)$.
The $\tau_{ i j}\geq 0$, $\bm{F}_i$, and $\bm{P}>0$ are the decision variables.
The external and internal variables were classified as $\bm{\alpha}=(\tau_{11},\tau_{31},\sigma_1,\sigma_3,\bm{F}_1,\bm{F}_2,\bm{F}_3)$ and $\bm{X}=\bm{P}$, respectively.
The bounds
   $ \tau_{i j}\in [0,5],  \sigma_i\in [-5,5] $ and  $ [\bm{F}_i]_{mn}\in [-5,5]$
were used.

OCS  (Observer-based Control System, Sec.~IV-A of~\cite{OCS} with BMIs in (16)):
\begin{equation*}
\footnotesize{
    \left[
      \begin{array}{cccc}
       ( \bm{P}_1 \bm{A}_i-\bm{P}_1\bm{B}_{2i} \bm{F}_i  ,\star) & \star & \star & \star  \\
        ( \bm{P}_1 \bm{B}_{2i}\bm{F}_i)^T  &    ( \bm{P}_2 \bm{A}_i-\bm{G}_{i} \bm{C}_{2i}  ,\star)   & \star & \star \\
        (\bm{P}_1 \bm{B}_1)^T &  (\bm{P}_2 \bm{B}_1)^T& -\gamma^2 \bm{I} & \star  \\
        \bm{C}_{1i}  &   0 & 0& -\bm{I}
      \end{array}
    \right]<0
}
\end{equation*}
 for  $i=1,2,3,4$.
The $\bm{F}_i,\bm{G}_i,i=1,2,3,4$, and $\bm{P}_i>0$, $i=1,2,$ are the decision variables, and
the observer gains $\bm{L}_i,i=1,2,3,4,$ are recovered by $\bm{L}_i=\bm{P}_2^{-1} \bm{G}_{i}$.
The external and internal variables were classified as
  $\bm{\alpha}=(\bm{F}_1,\bm{F}_2,\bm{F}_3,\bm{F}_4)$ and $\bm{X}=(\bm{P}_1,\bm{P}_2,\bm{G}_1,\bm{G}_2,\bm{G}_3,\bm{G}_4)$, respectively.
The bounds
\begin{equation*}
    eig\{\bm{A}_i-\bm{B}_{2i} \bm{F}_i\}\in  \{\sigma+j\omega :(\sigma,\omega ) \in [-20,0]\times [-20,20]  \}
\end{equation*}
for $i=1,2,3,4$,
were used.

\section{SOPs and MOPs with BMI Constraints}\label{app_SO}

If not specified, the following equation numbers are those in the respective references.

LPVS (Linear Parameter-varying System, (28)--(32) in Sec.~V-A of~\cite{LPVS}):
The problem was presented in~(\ref{eq_LPVS_SO}) of this paper, and the bounds
  $ \delta_i\in [0,1] $  and $ \varsigma \in [0,10]$ were used.

SSS  (Simultaneous State-feedback Stabilization, Sec.~4.2~\cite{SSS}):
A stabilizing state-feedback gain $\bm{F}$ exists if  the optimum of
\begin{equation*}
 \begin{split}
    \max_{\bm{F},\gamma_i } & \; \min \{ \gamma_1,\gamma_2,\gamma_3  \}  \\
    \mbox{subject to } &      [\bm{F}]_{mn}\leq F_{\max}\\
&               (  \bm{P}_i ( \bm{A}_i+\bm{B}_{i} \bm{F})  ,\star)+2\gamma_i  \bm{P}_i<0\\
&    \bm{P}_i>0, i=1,2,3
 \end{split}
\end{equation*}
is positive.
We let  $\bm{\alpha}=(\gamma_1,\gamma_2,\gamma_3,\bm{F})$, $\bm{X}=(\bm{P}_1,\bm{P}_2,\bm{P}_3)$, and $\bm{\mathcal{F}}(\bm{\alpha} )= - \min \{ \gamma_1,\gamma_2,\gamma_3  \} $.
The bounds
 $  [\bm{F}]_{mn} \in [-50,50]$ and  $\gamma_i \in [0,5]$
were used.

MCD  (Mixed $H_2$/$H_\infty$ Controller Design, Sec.~III-A of~\cite{ostertag2008improved} with BMIs in (3)):
We let $\bm{\alpha}=(\eta,\bm{K})$ and $\bm{X}=(\bm{P}_1,\bm{P}_2)$.
The bounds  $ \eta \in [0,2] $  and $ [\bm{K}]_{n} \in [-5,5]$ were used.

\emph{Spectral Abscissa Optimization}:
The bounds $ [\bm{F}]_{mn} \in [-50,50] $  and $   eig\{\bm{A}_{ \bm{F} }\}    \in  \{\sigma+j\omega :(\sigma,\omega ) \in [-20,0]\times [-20,20]  \}$
were used, $S=3$ subspaces were adopted, and  $\kappa_s $ was chosen from~$\{1,0.5,0.1\}$ uniformly at random
when each~$\bm{\alpha}_i$ in~(\ref{eq_ini_pop}) of this paper was constructed.
The same setting was used in the $H_2$ optimization, $H_\infty$ optimization, and MOPs as well.

\emph{$H_2$ Optimization}:
We let $\bm{D}_{11}=0$ in $G_{c\ell}(\bm{F})$ and solved
\begin{equation*}
 \begin{split}
    \min_{\bm{Y},\bm{F},\bm{Q}  } & \; ||G_{c\ell}(\bm{F})||_2  \\
    \mbox{subject to }  & \;   ( \bm{A}_{\bm{F}}\bm{Q},\star)+  \bm{B}_1 \bm{B}_1^T <0 \\
      & \;
  \left[
   \begin{array}{cc}
     \bm{Y}  & \bm{C}_1{\bm{Q}} \\
     \star &  \bm{Q} \\
   \end{array}
 \right]>0,  \bm{Q}>0.
 \end{split}
\end{equation*}
The term $\bm{B}_1 \bm{B}_1^T$ was replaced by $\bm{B}_1 \bm{B}_1^T+ 10^{-5} \bm{I}$ if it was not positive definite.
We let~$\bm{\alpha}=\bm{F}$, $\bm{X}=(\bm{Q},\bm{Y})$, and
$\bm{\mathcal{F}}(\bm{\alpha} )=||G_{c\ell}(\bm{F})||_2$ according to the classification principle.
Given the value of the external variable~$\bm{\alpha}=\bm{F}$, existing deterministic algorithms can be applied to evaluate~$\bm{\mathcal{F}}(\bm{\alpha} )=||G_{c\ell}(\bm{F})||_2$, e.g.,
 the MATLAB routine $\mathrm{norm(syst,p)}$ with $\mathrm{syst}=G_{c\ell}(\bm{F})$  and  $\mathrm{p}=2$ can be used.
 To facilitate numerical comparisons and expedite the solving process, we used
the following settings for both $H_2$ and $H_\infty$ optimization:
 the objective values $\tilde{\bm{\mathcal{F}}}:=[ 10^5\; 10^5+\alpha_o(\bm{A}_{\bm{F}})]^T$ were assigned without further evaluation of $\lambda^*(\bm{\alpha})$
whenever $\alpha_o(\bm{A}_{\bm{F}})\geq 0$; and
the bounds
$   eig\{\bm{A}_{ \bm{F} }\}    \in  \{\sigma+j\omega :(\sigma,\omega ) \in [-20,0]\times [-20,20]  \}$
were not used (only bounds on $[\bm{F}]_{mn}$ were used)  when problem NN11 in COMPl$_e$ib was solved.

 \emph{$H_\infty$ Optimization}:
We solved
\begin{equation*}
 \begin{split}
    \min_{\bm{Y},\bm{F},\gamma  } & \;  ||G_{c\ell}(\bm{F})||_\infty  \\
    \mbox{subject to }  & \;
 \left[
   \begin{array}{ccc}
    ( \bm{Y}\bm{A}_{\bm{F}},\star) & \bm{XB}_1 & \bm{C}_{\bm{F}}^T \\
     \star & -\gamma \bm{I} & \bm{D}_{11}^T \\
    \star & \star& -\gamma \bm{I}\\
   \end{array}
 \right]<0\\
 & \;  \bm{Y}>0, \gamma >0.
     \end{split}
\end{equation*}
We let~$\bm{\alpha}=\bm{F}$, $\bm{X}=(\gamma,\bm{Y})$, and
$\bm{\mathcal{F}}(\bm{\alpha} )=||G_{c\ell}(\bm{F})||_\infty$.
Given the value of the external variable~$\bm{\alpha}=\bm{F}$, the value of~$\bm{\mathcal{F}}(\bm{\alpha} )=||G_{c\ell}(\bm{F})||_\infty$
can be determined using deterministic algorithms, e.g.,  the MATLAB routine $\mathrm{norm(syst,p)}$ with $\mathrm{syst}=G_{c\ell}(\bm{F})$  and $\mathrm{p=inf}$ can be used.

 \emph{MOPs}: We used
the bound~$\beta\in [0,1.5]$
in the sparse linear constant output-feedback design,
and the bounds $ \eta \in [0,2],\gamma \in[1,5]  $
 in the mixed $H_2$/$H_\infty$ design.


\end{document}